\documentclass[lettersize,journal]{IEEEtran}
\usepackage{amsmath,amsfonts,algorithmic,algorithm,array,textcomp,stfloats,url,verbatim,graphicx,cite,amssymb,amsthm,algorithmic,graphicx,textcomp,xcolor,float,tabularx,epstopdf,graphicx,colortbl,blkarray,arydshln,mathtools,nicematrix,physics,algorithm}

\newtheorem{theorem}{Theorem}

\newtheorem{lemma}{Lemma}

\DeclarePairedDelimiter\ceil{\lceil}{\rceil}
\DeclarePairedDelimiter\floor{\lfloor}{\rfloor}

\usepackage{tikz}
\usetikzlibrary{automata,arrows,positioning,calc}

\def\BibTeX{{\rm B\kern-.05em{\sc i\kern-.025em b}\kern-.08em
		T\kern-.1667em\lower.7ex\hbox{E}\kern-.125emX}}

\setcounter{MaxMatrixCols}{20}

\begin{document}

\title{Minimizing Age of Information \\ with Generate at Will Status Updates \\ and Age-Agnostic Cyclic Scheduling }

\author{Ege~Orkun~Gamgam,~Nail~Akar,~\IEEEmembership{Senior Member,~IEEE},~and~Sennur~Ulukus,~\IEEEmembership{Fellow,~IEEE}
\thanks{E.~O.~Gamgam is currently with Aselsan Inc., Ankara, T\"{u}rkiye (e-mail: eogamgam@aselsan.com.tr). This work is done during the PhD thesis study of E.~O.~Gamgam at Bilkent University.}
\thanks{N.~Akar is with the Electrical and Electronics Engineering Department, Bilkent University, Ankara, 
T\"{u}rkiye 
(e-mail:akar@ee.bilkent.edu.tr). 
Part of this work is done during the sabbatical visit of N.~Akar to University of Maryland, MD, USA, which is supported in part by the Scientific and Technological Research Council of Turkey 
(T\"{u}bitak) 2219-International Postdoctoral Research Fellowship Program.} 
\thanks{S.~Ulukus is with the Department of Electrical and Computer Engineering, University of Maryland, MD, USA (e-mail:ulukus@umd.edu).}}

\maketitle

\begin{abstract}
We study the scheduling problem for a multi-source single-server \emph{generate-at-will} (GAW) status update system with sources having heterogeneous service times and weights, with the goal of minimizing the weighted sum age of information (AoI). 
In particular, we study \emph{age-agnostic} schedulers which rely only on the first two moments of the source service times and they are relatively easier to implement than their age-aware counterparts which make use of the actual realizations of the service times. In particular, we focus on age-agnostic \emph{cyclic schedulers} with $O(1)$ runtime complexity where status updates from multiple sources are scheduled according to a fixed finite transmission pattern.  We first develop an analytical method to obtain the exact average AoI of each source when a transmission pattern is given. Then, we derive the optimum transmission pattern in closed form for the specific case of two sources. 
 For general number of sources, we propose a novel algorithm, called IS (Insertion Search), for constructing transmission patterns, and we show that IS is capable of producing the optimum pattern for two-source systems, and it outperforms other existing age-agnostic schemes, for the case of more than two sources.  
Numerical examples are presented to showcase the effectiveness of the proposed approach.
\end{abstract}

\begin{IEEEkeywords}
	Age of information (AoI), multi-source status update systems, cyclic scheduling.
\end{IEEEkeywords}

\section{Introduction}
Timely delivery of time-sensitive updates plays a vital role in networked control and remote monitoring systems. In this paper, we consider the one-hop cellular wireless network in Fig.~\ref{fig:system} with the base station (BS),  also called the server, collecting time-sensitive information from $N$ information sources for each of which an associated physical process is sampled with the sample values and their timestamps written into information packets.  In this setting, a source stands for the process/sensor pair.
The focus of this paper is the {generate-at-will} (GAW) pull-based model \cite{lts2015} for which the server is to decide which source process to sample and collect its time-sensitive information from. The alternative to GAW is the random arrival (RA) model which is left outside the scope of this paper. In RA, status update packet generation is governed by the  source, whereas the server is to schedule the randomly arriving packets from the sources. 
GAW can be viewed as the heavy traffic limit of RA when the packet arrival rates from all sources approach infinity and each source keeps only the freshest packet in its buffer.  
In our proposed GAW framework, the server sends a polling message to the scheduled source which subsequently samples its process and generates a new status update packet to be transmitted towards the server in the uplink direction, while avoiding collisions. For GAW timely status update systems, polling multiple access \cite{kadota2021wifresh} is a promising candidate compared to other well-established channel access techniques including random access \cite{chen2022age,yu2024optimizing} which may be more suitable for RA systems in which sources decide when to sample their processes.
For the purposes of energy efficiency, polling messages can be sent in the form of special wake-up radio signals making polling multiple access a viable alternative for very low power users \cite{ghose2018mac,trotta2019bee,shiraishi2022query,shiraishi2024coexistence}.  Fig.~\ref{fig:system} illustrates the IoT data collection system of interest considered in this paper for which 
we assume that information sources use a suitable modulation and coding scheme (MCS) (see \cite{mcs1,mcs2}) depending on their distance from the BS giving rise to heterogeneous (or asymmetric) packet service times characterized by their first two moments. The deployed MCS (along with optional automatic repeat request (ARQ) or hybrid ARQ (HARQ) mechanisms \cite{arq1,arq2}) is assumed to ensure that packet transmission errors are negligible. We leave the investigation of scheduling in the presence of packet errors outside the scope of this paper. We also note that the heterogeneous network model studied in the current paper is different than slotted models with unit service times and heterogeneous packet error probabilities, which are studied extensively in the existing AoI literature \cite{RoyYates__AgeOfInfo_Survey, kosta_etal_survey, kadota_tn18}. We note that the slotted model can also be studied with the proposed general service times model when ARQ/HARQ re-transmissions take place until a status update packet is eventually received successfully at the BS with high probability.

\begin{figure}[t]
	\centering
	\begin{tikzpicture}[scale=0.25]
	\tikzstyle{note} = [rectangle, dashed, draw, fill=white, font=\footnotesize,
	text width=20em, text centered, rounded corners, minimum height=18em]
	\draw[thick] (4.5,8) circle (1.5)  ;
 \draw[thick](4.5,2) circle (1.5);
	\draw[thick](4.5,-6) circle (1.5);
	\draw (4.5,2) node[anchor=center] {{$2$}} ;	
	\draw (4.5,8) node[anchor=center] {{$1$}} ;
	\draw (4.5,-6) node[anchor=center] {{$N$}} ;
	\draw (4.5,-3.3) node[anchor=south] {{$\vdots$}};
	\draw[thick,dashed,<-] (7,2.5) -- (11.5,2.5) ;
        \draw[thick,->] (7,1.5) -- (11.5,1.5) ;
	\draw[thick,<-,dashed] (7,7.5) -- (11.5,4) ;
        \draw[thick,->] (7,6.5) -- (11.5,3) ;
 
	\draw[thick,dashed,<-] (7,-4.5) -- (11.5,1) ;
 \draw[thick,->] (7,-5.5) -- (11.5,0) ;
	\draw[<-, dotted, very thick] (13,-1.2) arc (270:90:3) node[anchor=east] {};
 \draw (15.7,-1.2) node[anchor=center] {\small{\textcolor{black}{scheduler}}};
	\draw[thick](15,2) circle (2);
 
    \filldraw (15,2)  node[anchor=center] {{BS}};	
    \draw[thick,dashed,->] (12,-4) -- (15,-4) ;
    \filldraw (15,-4)  node[anchor=west] {\small{polling message}};
    \draw[thick,->] (12,-5.5) -- (15,-5.5) ;
    \filldraw (15,-5.5)  node[anchor=west] {\small{status update}};
	\end{tikzpicture}
	\caption{Status update packets from $N$ information sources are collected by the base station (BS) employing an age-agnostic scheduler. A polling message is first sent to the scheduled source which in turn samples and transmits its status update packet to the BS using a certain modulation and coding scheme specific to the source.}
	\label{fig:system}
\end{figure}
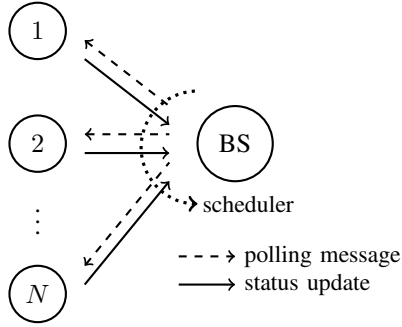

A well-established metric to quantify timeliness is derived from the age of information (AoI) process for source-$n$, $n=1,2,\ldots,N$ which is defined as the random process $\Delta_n(t)= t-g_n(t)$ where $g_n(t)$ is the generation time of the last status update packet received at the destination from source-$n$. Sample paths of the AoI process $\Delta_n(t)$ increase in time with unit slope with abrupt drops at packet reception instances, whereas the peak AoI (PAoI) process is obtained by sampling the AoI process just before the packet reception instants. 
In order to provide system level freshness, the goal of the scheduler in Fig.~\ref{fig:system} is to minimize the weighted sum AoI, or system AoI in short, which is defined as the weighted sum of the mean values of the individual ergodic AoI processes. 

The scheduling problem with the goal of system AoI minimization is known to be an NP-hard problem in general heterogeneous multi-source settings \cite{he_yuan_ephremides_T-IT18}. However, some special cases are shown to be tractable in \cite{he_yuan_ephremides_T-IT18} along with their optimality conditions. Two main types of schedulers have been studied in the literature, namely, age-aware and age-agnostic schedulers. Age-aware schedulers make use of the instantaneous values of the AoI processes while making scheduling decisions, e.g., max-weight or Whittle-index policies, \cite{kadota_tn18,maatouk_etal_TWC21,kriouile_etal_T-IT22}. On the other hand, age-agnostic schedulers rely only on the a-priori information on the per-source service times and source weights. In our proposed framework, the focus is on the minimizaton of the weighted AoI at the BS. The case of remote monitors not co-located with the BS, i.e., multi-hop scenario, for which the AoI at the monitors may be different than the AoI at the server, is outside the scope of this paper. 
In the setting of this paper, age-aware schemes can also be used and we will compare our proposed age-agnostic schemes with existing age-aware schemes in certain scenarios, whereas we note that age-aware schemes may not be used as effectively when the focus is on the weighted AoI at the remote monitors that are not co-located with the scheduler.

Two types of age-agnostic schedulers stand out depending on whether the scheduler chooses to serve information sources 
according to a fixed pattern of transmissions which repeats itself, called cyclic scheduling or cyclic GAW (C-GAW) \cite{eywa,gau_ciss24},
or
with certain probabilities \cite{akar_gamgam_comlet23}, called probabilistic GAW (P-GAW).
The focus of this paper is on C-GAW due to its superior performance over P-GAW \cite{gau_ciss24} but P-GAW will be used in the numerical examples for benchmarking purposes. 
C-GAW is characterized with a pattern of size $K \geq N$, $P=[P_0,P_1,\ldots,P_{K-1}]$
with $P_i$ being the index of the $i$th source in the transmission pattern, for which the server continuously polls the sources in the following order $P_0,P_1,\ldots,P_{K-1},P_0,P_1,\ldots$. 
In a symmetric network with identically distributed service times and identical weights across the sources, the optimum transmission pattern is naturally $P=[1,2,\ldots,N]$, by which the server will serve each source on a round-robin (RR) basis. However, as will be shown in the numerical examples, different patterns than that of the RR policy can be obtained in heterogeneous networks that can reduce the system AoI significantly with respect to RR. 
The transmission pattern obtained by the offline cyclic scheduling algorithm is then placed in the memory in a circular buffer data structure, and the runtime complexity of C-GAW is $O(1)$ since it takes one pointer update at a scheduling instant to retrieve the next scheduled source from the circular buffer.
Therefore, C-GAW is advantageous (in terms of runtime implementation) than P-GAW and index-based age-aware scheduling which both have $O(N)$ runtime complexity. 
Moreover, most existing age-aware schedulers are developed for slotted systems with heterogeneous error probabilities. In particular, there is no age-aware scheduler, to the best of our knowledge, proposed for heterogeneous service times (in the first two moments) and weights, which is the setting of the current paper. The focus of this paper is the development of fixed transmission patterns for general number of sources. 

Our contributions are as follows: 
\begin{itemize} 
\item
We develop an analytical method to calculate the mean AoI of an information source for C-GAW for a given transmission pattern using only the first two moments of the service times. We also develop a similar method for the P-GAW benchmark when the transmission probabilities are given. 
\item  We derive the optimum C-GAW scheduler for the case of two sources. We show that the optimum schedule is in the form of transmission of a number of subsequent updates for one of the sources followed by the transmission of a single update for the other source. We obtain the parameters of this policy in closed form, and show that they depend only on the first and second moments of the source service times in addition to the source weights. 
\item Based on the analytical method we have developed, we propose a heuristic search-based cyclic scheduler called IS (Insertion Search) for general number of sources. 
We show that IS produces the optimum pattern for two-source systems whereas it outperforms other existing age-agnostic schemes for the case of larger than two sources.  We also show through numerical examples that IS with $O(1)$ runtime complexity performs close to optimum age-aware schemes with $O(N)$ runtime complexity (for  scenarios when the latter is applicable), especially for small number of users.
\end{itemize}

The remainder of the paper is organized as follows. Section~\ref{sec:related} overviews the related work. Section~\ref{sec:systemmodel} presents the system model. In Section~\ref{sec:analysis}, we present the analytical method to obtain the system AoI for C-GAW and also P-GAW. Section~\ref{sec:twosources} derives the optimum C-GAW scheduler in closed form for the case of two sources. Section~\ref{sec:Nsources} presents the proposed IS scheduler for general number of sources. In Section~\ref{sec:numerical}, we provide an extensive numerical experimentation for the proposed C-GAW schedulers. Finally, Section~\ref{sec:conclusions} concludes the paper.

\section{Related Work} \label{sec:related}
There has been a surge of interest on AoI modeling and optimization since the concept of AoI was first introduced in a single-source, single-server queuing system setting \cite{kaul_etal_infocom12}. The surveys \cite{RoyYates__AgeOfInfo_Survey} and \cite{kosta_etal_survey} examine in detail the existing work in this field while laying out several open problems. For RA multi-source models, stochastic hybrid systems (SHS) approach is proposed by \cite{Yates__SBR} for obtaining the mean AoI for various buffer management schemes. A closed-form expression for the average age of each source in a globally preemptive M/G/1/1 status update system is studied using the detour flow graph method in \cite{najm2018status}. The SHS approach is also extended to obtain the moment generating function (MGF) and also the higher moments of AoI in various works \cite{yates_mgf, elmagid_dhillon_tit22, moltafet_etal_tcom22, maatuk_isit22}.

A recent approach different than SHS is proposed in \cite{akar_gamgam_comlet23} for finding the distribution of AoI in multi-source probabilistic GAW and also RA servers that are equipped with a single packet buffer, using absorbing Markov chains (AMC). However, for GAW servers, \cite{akar_gamgam_comlet23} requires the use of phase-type service times, which is required when the interest is on the distribution of AoI. When the interest is on the mean AoI only (as in the current paper), then one does not need the distributions of the service times as an input. Hence, the current work is different than \cite{akar_gamgam_comlet23} since the mean AoI values will be shown to be dependent only on the first two moments of the per-source service times for both probabilistic and cyclic GAW schedulers.

For scheduling to minimize AoI, \cite{bedewy_etal_tit21} considers a multi-source system with identically distributed per-source service times, and shows that for any given sampling strategy, the maximum age first (MAF) strategy provides the best age performance among all scheduling strategies. \cite{joo_eryilmaz_ton18} proposes an age-based scheduler that combines age with the inter-arrival times of incoming packets in scheduling decisions to achieve improved information freshness. Although the analytical results are obtained for only heavy-traffic, its numerical results reveal that the proposed algorithm achieves desirable freshness performance for lighter loads as well. \cite{kadota_tn18} and \cite{kadota_tmc21} consider an asymmetric discrete-time wireless network with a BS serving multiple traffic streams using per-source queuing under the assumption of synchronized and random information packet arrivals, respectively, and propose nearly optimal age-based schedulers and age-agnostic randomized schedulers. 
\cite{yang_etal_icassp20} considers an age minimization problem via an age-based scheduling policy which jointly accounts for the staleness of the received parameters and also the instantaneous channel qualities, to improve the running efficiency of federated learning.
\cite{timely_throughput_1} and \cite{timely_throughput_2} study the minimization of AoI subject to timely throughput constraints, whereas \cite{energy_constraint} considers an energy constraint.
\cite{csma} develops distributed CSMA-type protocols for single-hop wireless networks, and \cite{spatially_distributed} considers a spatially distributed random access wireless network for AoI minimization.

Our scheduling work is most closely related to \cite{eywa},\cite{eywa_phDthesis} where a general design framework (Eywa) is proposed to build high-performance cyclic schedulers for AoI-related optimization and decision problems. However, the framework presented in \cite{eywa} is based on a discrete-time setting with deterministic service times (one slot duration) whereas in our work, we study generally distributed service times. Our model is more suitable for physical layer networks for which users can choose an appropriate MCS depending on their distance from the BS, whereas the framework of \cite{eywa} is more suitable when this flexibility is not present. In this regard, we allow general service times to be able to accurately characterize the underlying physical layer. Our work is also related to \cite{AKAR2023109668} which focuses on the minimization of weighted sum PAoI (as opposed to AoI) in GAW status update systems.
However, the second moments of the service times do not play a role in weighted sum PAoI optimization in \cite{AKAR2023109668} for which an optimum solution is known for general number of sources. The current paper is an extension of \cite{gau_ciss24} with an elaborate assessment of the IS algorithm in various scenarios, whereas our previous work \cite{akar_etal_infocom24} focuses on cyclic scheduling algorithms (scalable to thousands of sources) tailored more for deterministic service times and heterogeneous error probabilities with the goal of alignment with the existing literature on AoI scheduling. 

\section{System Model} \label{sec:systemmodel}
We consider the status update system in Fig.~\ref{fig:system} with $N$ information sources indexed by $n=1,2,\ldots,N$ each of which generates samples of an associated random process with the sample values and timestamps written into information packets, which are to be transmitted to the BS which is the destination of the status update system. Service times of source-$n$ packets, denoted by $S_n$, time required for transmission of source-$n$ information packets to the BS, are i.i.d.~and have a general distribution with moment generating function (MGF) denoted by $G_n(s)=\mathbb{E}[e^{s S_n}]$, mean $s_n =\mathbb{E} [S_n] = G_n'(0)$, second moment $q_n = \mathbb{E} [S_n^2]=G_n''(0)$, variance $v_n = q_n - s_n^2$, and squared coefficient of variation $c_n = \frac{v_n}{s_n^2}$. 
In fact, we only need to know the first two moments, but the MGF formalism is used in detailing the analytical method. 
The per-source service times are independent of each other and there are no packet errors.
We also assume that transmission times for polling messages are much shorter than the packet service times and they are neglected in the analysis.

We let $\Delta_n(t)$, $t\geq 0$ (resp. $\Phi_{n,k}$, $k=1,2,\ldots$) denote the AoI process (resp. PAoI process) observed at the BS for source-$n$, $n=1,\ldots,N$. Fig.~\ref{fig:SamplePath1} illustrates the evolution of the AoI/PAoI processes for tagged source-$1$ when C-GAW is used with an example pattern $P=[1,2,3,2]$. We let $\Delta_n$ (resp. $\Phi_n$) denote the steady-state random variable for the associated AoI (resp. PAoI) process observed at the BS for source-$n$, $n=1,\ldots,N$, with their means 
written as time averages due to the ergodicity of both processes,
\begin{align}
 \mathbb{E} [\Delta_n] & =  \lim_{T \rightarrow \infty} \frac{1}{T} \int_{t=0}^{T} \Delta_n(t) \dd{t}, \label{eqn:sourceAoI} \\
 \mathbb{E} [\Phi_n]  & =  \lim_{T \rightarrow \infty} \frac{1}{T} \sum_{k=1}^{T} \Phi_{n,k}.
 \label{eqn:sourcePAoI}
\end{align}
We define the system AoI (resp. system PAoI) as the mean of weighted AoI (resp. weighted PAoI) $\Delta = \sum_{n=1}^N w_n \Delta_n$ (resp. $\Phi= \sum_{n=1}^N w_n \Phi_n$),
\begin{align}
 \mathbb{E} [\Delta] & =  \sum_{n=1}^N w_n \mathbb{E} [\Delta_n],  \label{eqn:systemAoI} \\
 \mathbb{E} [\Phi] & =  \sum_{n=1}^N w_n \mathbb{E} [\Phi_n], \label{eqn:systemPAoI}
\end{align}
where the normalized source weight $w_n$, $n=1,\ldots,N$ (with unit sum) reflects the relative urgency of source-$n$ with respect to others. 

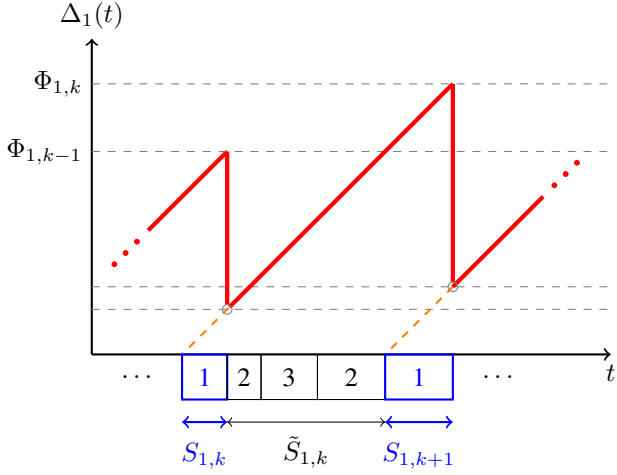
\begin{figure}[tb]
	\centering
	\begin{tikzpicture}[scale=0.30]
	\draw[thick,->] (0,0) -- (23,0) node[anchor=north] {$t$};
	\draw[thick,->] (0,0) -- (0,14) node[anchor=south] {$\Delta_{1}(t)$};
	\draw[ultra thick,red] (2.5,5.5) -- (6,9);
	\filldraw[red] (2,5) circle (3pt);
	\filldraw[red] (1.5,4.5) circle (3pt) ;
	\filldraw[red] (1,4) circle (3pt); 
	\draw (0,9) node[anchor=east] {$\Phi_{1,k-1}$};
	\draw[dashed,gray] (0,9) -- (23,9);
	\draw[dashed,gray] (0,3) -- (23,3);
	\draw[dashed,gray] (0,2) -- (23,2);
	\draw[dashed,thick,orange] (16,12) -- (4,0);  
	\draw[dashed, thick, orange] (16,3) -- (13,0);  
	\draw[ultra thick,red] (6,9) -- (6,2.2);
	\draw[ultra thick,red] (6.1,2.1) -- (16,12);
	\draw (0,12) node[anchor=east] {$\Phi_{1,k}$};
	\draw[dashed,gray] (0,12) -- (23,12);
	\draw[ultra thick,red] (16,12) -- (16,3.2);
	\draw[ultra thick,red] (16,3) -- (20,7);
	\filldraw[red] (20.5,7.5) circle (3pt);
	\filldraw[red] (21,8) circle (3pt);
	\filldraw[red] (21.5,8.5) circle (3pt); 
	\draw[gray] (6,2) circle (6pt);
	\draw[gray] (16,3) circle (6pt);
 \draw[blue,thick] (4,-2) rectangle (6,0) node[midway,blue]{1};
 \draw[] (6,-2) rectangle (7.5,0) node[midway]{2};
 \draw[] (7.5,-2) rectangle (10,0) node[midway]{3};
 \draw[] (10,-2) rectangle (13,0) node[midway]{2};
 \draw[blue, thick] (13,-2) rectangle (16,0) node[midway,blue]{1};
\node[black,rectangle] at (2,-1) {$\ldots$};
\node[black,rectangle] at (18,-1) {$\ldots$};
\draw[blue,thick,<->] (4,-3) -- (6,-3);
\draw[blue] (5,-3.5) node[anchor=north] {$\footnotesize{S_{1,k}}$};
\draw[black,<->] (6,-3) -- (13,-3);
\draw[black] (9.5,-3.2) node[anchor=north] {$\footnotesize{\tilde{S}_{1,k}}$};
\draw[blue,thick,<->] (13,-3) -- (16,-3);
\draw[blue] (14.5,-3.5) node[anchor=north] {$\footnotesize{S_{1,k+1}}$};
\end{tikzpicture}
\caption{Sample path of the AoI/PAoI processes for source-$1$ for C-GAW with $P=[1,2,3,2]$.}
\label{fig:SamplePath1}
\end{figure}

The C-GAW server employs a pattern of size $K \geq N$,
\begin{align}
P & = [P_0, P_1, \ldots, P_{K-1}], \ P_k \in \{1,2,\ldots,N \}.   
\end{align} 
The resulting C-GAW scheduler initiates the sampling and transmission of source-$P_k$ information packet at scheduling instants $k+iK$, $i \in \mathbb{Z^+}$, where $\mathbb{Z^+}$ denotes the set of non-negative integers. The pattern size $K$ cannot be strictly less than $N$ since in that case some user packets would not ever be transmitted leading to unbounded weighted AoI. The specific scenario $K=N$ corresponds to RR scheduling whereas in the $K > N$ case, some users will appear more than once in the pattern, which has advantages in terms of system AoI in relatively heterogeneous scenarios.
The goal of the paper is to find the optimum C-GAW transmission pattern that minimizes the system AoI.  However, P-GAW systems minimizing system AoI and also system PAoI, will be used as benchmark policies for comparison purposes. In particular, the benchmark P-GAW server schedules the information source-$n$ for sampling and transmission with probability $p_n$ with $\sum_{n=1}^N p_n=1$, i.e., work-conserving server, at a scheduling instant. Once the transmission of a source-$n$ packet is completed (which requires a duration of $S_n$), a new scheduling instant is initiated.

\section{Analytical Method} \label{sec:analysis}
Let $S_{n,k}$ denote the service time of the $k$th transmission of source-$n$, for $k \geq 1$, and let $\tilde{S}_{n,k}$ denote the time duration between the end of the $k$th transmission and the beginning of the $(k+1)$st transmission of source-$n$. Note that $\tilde{S}_{n,k}$ is the sum of the service times of all information packets generated from sources other than source-$n$ between two successive transmissions, $k$ and $k+1$, of source-$n$. Common to both types of schedulers, the $k$th cycle, $k \geq 1$, of the AoI process $\Delta_n(t)$ starts from the value $S_{n,k}$ and increases with unit slope for a duration of $\tilde{S}_{n,k} + S_{n,k+1}$ until it hits the value $S_{n,k} + \tilde{S}_{n,k} + S_{n,k+1}$ at which point a new cycle is initiated which will start from the value $S_{n,k+1}$. See Fig.~\ref{fig:SamplePath1} for the evolution of the AoI process for source-$1$ for an example scenario. The duration, and the area under the $\Delta_n(t)$ curve, of the $k$th AoI cycle, denoted by $T_{n,k}$ and $A_{n,k}$, respectively, can be written as,
\begin{align}
T_{n,k} & = \tilde{S}_{n,k} + S_{n,k+1}, \label{onemli1} \\ 
  A_{n,k} & =  S_{n,k} (\tilde{S}_{n,k} + S_{n,k+1}) + \frac{1}{2} (\tilde{S}_{n,k} + S_{n,k+1})^2. \label{onemli2}
\end{align}

Note that the asymptotically stationary discrete-time random process $T_{n,k}$ has a limiting marginal distribution, i.e.,  $T_{n,k} \sim T_n$, i.e., $\lim_{k \rightarrow \infty} \mathbb{P}[T_{n,k} \leq x] = \mathbb{P}[T_{n} \leq x]$, for all $x \geq 0$. Similarly, $A_{n,k} \sim A_n$ and $\tilde{S}_{n,k} \sim \tilde{S}_n$, for the random variables $A_n$ and $\tilde{S}_n$, respectively. 
Let random variable $\tilde{S}_n$ have the MGF $\tilde{G}_n(s)=\mathbb{E}[e^{s\tilde{S}_n}]$, mean $\tilde{s}_n =\mathbb{E} [\tilde{S}_n]$, second moment $\tilde{q}_n = \mathbb{E} [\tilde{S}_n^2]$, and variance $\tilde{v}_n = \tilde{q}_n - \tilde{s}_n^2$ and squared coefficient of variation (scov) $\tilde{c}_n = \frac{\tilde{v}_n}{\tilde{s}_n^2}$.

The mean AoI for source-$n$ $\mathbb{E}[\Delta_n]$ is then written as the following ratio, see e.g., \cite{RoyYates__AgeOfInfo_Survey},
\begin{align}
  \mathbb{E}[\Delta_n]  & = \frac{\mathbb{E}[A_{n}]}{\mathbb{E} [T_n]}, 
  \end{align}
  which then can be written (by taking the ratio of the expectations of the equations \eqref{onemli1} and \eqref{onemli2}) in terms of the parameters $s_n$, $q_n$ (or $v_n$ or $c_n$), $\tilde{s}_n$, and $\tilde{q}_n$ (or $\tilde{v}_n$ or $\tilde{c}_n$), as 
  \begin{align}
 \mathbb{E}[\Delta_n]  & = \frac{2 s_n^2 + 4 s_n \tilde{s}_n + q_n + \tilde{q}_n}{2(s_n + \tilde{s}_n)}, \label{exp_2mom} \\
  & =  \frac{(3s_n + \tilde{s}_n)(s_n + \tilde{s}_n)+ v_n + \tilde{v}_n}{2(s_n + \tilde{s}_n)}, \label{exp_var} \\
  & = \frac{s_n^2 (c_n + 3) +\tilde{s}_n^2(\tilde{c}_n + 1) + 4 s_n \tilde{s}_n }{2(s_n + \tilde{s}_n)}. \label{exp_scov}
\end{align}
On the other hand, the mean PAoI for source-$n$ is written as,
 \begin{align}
 \mathbb{E}[\Phi_n]  & = 2 s_n + \tilde{s}_n. \label{exp_PAoI}
\end{align}

In order to find $\mathbb{E}[\Delta_n]$ (resp. $\mathbb{E}[\Phi_n])$, we need to obtain the first two moments (resp. the first moment) of $\tilde{S}_n$, which is a function of the scheduler. For convenience of notation, we denote by $s_Q$, $v_Q$ and $q_Q$, the mean, variance, and the second moment, respectively, of the sum of the service times needed to transmit any given pattern $Q=[Q_0,Q_1,\ldots,Q_{R-1}]$ of size $R$,
\begin{align}
   s_Q =\sum_{i=0}^{R-1} s_{Q_i}, \ v_Q =\sum_{i=0}^{R-1} v_{Q_i}, \ q_Q  = v_Q + s_Q^2. \label{func_frame}
\end{align}

Let us now start with the C-GAW scheduler for a given pattern $P$ with size $K$ for which source-$n$ appears $\alpha_{P,n} > 0$ times in the frame. 
Let $\tilde{P}(n,k)$, $k=0,1,\ldots,\alpha_{P,n}-1,$ denote the sub-pattern obtained by deleting all entries in the original pattern except for the entries between the $k$th and $(k+1)$st $\left( \text{modulo } \alpha_{P,n} \right)$  appearances of source-$n$, excluding the end points. As an example, consider source-1 for a scenario with $N=3$, and
\begin{align}
 K &=7, \ P=[3, 1, 2, 3, 1, 3, 2].    \label{example1}
\end{align}
In this case, $\alpha_P(1)=2$, and we have two sub-patterns for source-1,
\begin{align}
\tilde{P}(1,0) &=[2, 3], \ \tilde{P}(1,1) = [3, 2, 3].    
\end{align}
Recalling the definitions given in \eqref{func_frame}, we can write
\begin{align}
    \tilde{s}_n & = \frac{1}{\alpha_{P,n}} \sum_{k=1}^{\alpha_{P,n}} s_{\tilde{P}(n,k-1)},\label{firstmoment_cgaw} \\
    \tilde{q}_n & = \frac{1}{\alpha_{P,n}} \sum_{k=1}^{\alpha_{P,n}} q_{\tilde{P}(n,k-1)}. \label{secondmoment_cgaw}
\end{align}
At this stage, we obtain $\mathbb{E} [\Delta_n]$ and $\mathbb{E} [\Phi_n]$ for C-GAW by substituting the expression for $\tilde{s}_n$ in \eqref{firstmoment_cgaw} and that for $\tilde{q}_n$ in \eqref{secondmoment_cgaw}, into \eqref{exp_2mom} and \eqref{exp_PAoI}, respectively.

Let us now turn our attention to the benchmark policy P-GAW. 
At a  scheduling instant, we choose to transmit source-$n$ with probability $p_n$ or some other source $m \neq n$ with probability $1-p_n$. Therefore, the number of transmissions required until a single source-$n$ transmission, denoted by $U_n$, is geometrically distributed with parameter $p_n$. 
Between every two successive source-$n$ transmissions, we have $K_n = U_n-1$ transmissions from other sources. Hence, the probability generating function (PGF) of $K_n$ denoted 
by $H_n(z) = \mathbb{E} [z^{K_n}]$ is given by,
\begin{align}
    H_n(z) & = p_n (1 + (1-p_n)z + (1-p_n)^2 z^2 + \cdots) \\  
           & = \frac{p_n}{1 - (1-p_n)z}.
\end{align}

Note that $\tilde{S}_n$ corresponds to the sum of the service times of $K_n$ transmissions from the sources other than source-$n$. Each of the $K_n$ transmissions belongs to source-$m$ with probability $p_m/(1-p_n)$, $m \neq n$. Let $V_n$ denote the service time of any one of these transmissions. Therefore, using the method of collective marks described in \cite{kleinrock_book}, we write the MGF of $\tilde{S}_n$  as follows,
\begin{align}
     \tilde{G}_n(s) & = \frac{p_n}{1 - (1-p_n) \mathbb{E} [e^{s V_n}]}\\
     & =  \frac{p_n}{1 - \sum_{m \neq n} p_m G_m(s)}.
\end{align}
We differentiate $\tilde{G}_n(s)$ twice to obtain $\tilde{s}_n$ and $\tilde{q}_n$,
\begin{align}
    \tilde{s}_n & = \tilde{G}_n'(0) = \frac{\sum_{m \neq n} p_m s_m}{p_n} \label{firstmoment_pgaw} \\ 
    \tilde{q}_n & =\tilde{G}_n''(0) = \frac{\sum_{m \neq n} p_m q_m}{p_n} + \frac{2(\sum_{m \neq n} p_m s_m)^2}{p_n^2}.
    \label{secondmoment_pgaw}
\end{align}
Substituting the expressions \eqref{firstmoment_pgaw} and \eqref{secondmoment_pgaw} into \eqref{exp_2mom} and \eqref{exp_PAoI}, we obtain the mean AoI and mean PAoI, respectively, for a given source-$n$ for P-GAW.

\section{Optimum C-GAW Server for Two Sources} \label{sec:twosources}
In this section, $N$ is fixed to $2$. Let us start with an arbitrary pattern $P$ having size $K \geq 2$ composed of $K_1 \geq 1$ and $K_2 \geq 1$ updates, for source-$1$ and source-$2$, respectively, with $K = K_1+K_2$. 
We define a placement vector $r$ (resp. $z$) of size $K_1$ (resp. $K_2$) such that $r_k$, $k=0,1,\ldots,K_1-1$ (resp. $z_k$, $k=0,1,\ldots,K_2-1$) is equal to the number of source-2 (resp. source-1) transmissions that take place between the $k$th and $(k+1)$st modulo $K_1$ (resp. modulo $K_2$) appearances of source-1 (resp. source-2) in the pattern $P$. Note that $r_k$ (resp. $z_k$) corresponds to the length of the sub-pattern $\tilde{P}(1,k)$ (resp. $\tilde{P}(2,k)$).

Without loss of generality, the C-GAW scheduling policy for $N=2$ sources is characterized with the triple $\pi = (K_1,K_2,r) \in \Pi$, $K_1\geq 1$, $K_2 \geq 1$, and $\sum_{k=1}^{K_1} r_{k-1} = K_2$, where $\Pi$ denotes all feasible cyclic scheduling policies. When a policy $\pi$ is employed at the C-GAW server, the system AoI obtained with this policy is defined as $\mathbb{E}[\Delta^\pi] = w_1 \mathbb{E}[\Delta_1^\pi] + w_2 \mathbb{E}[\Delta_2^\pi]$, where the term $\mathbb{E}[\Delta_n ^\pi]$ is the mean AoI for source-$n$ attained under policy $\pi$. An optimum update policy, denoted by $\pi^*$, is one that minimizes the system AoI among all feasible policies $\pi \in \Pi$, and is not necessarily unique. 

The next lemma is crucial for obtaining the optimum policy.

\begin{lemma} \label{lemma:1}
    For a two-source C-GAW server, and for a given scheduling pattern $P$ with size $K$ and number of source-$1$ and source-$2$ updates $K_1$ and $K_2$, such that $K=K_1+K_2$, and $\gamma = K/K_1$, the optimum placement vector for source-$1$ ${r^*} = [ r_0^*,  \ldots, r_{K_1-1}^* ]$ which jointly minimizes the mean AoI for both sources is given as follows. When $\gamma \in \mathbb{Z}$,
    \begin{align} \label{eq:optimalPlacement_equation-1}
    r_{k-1}^* = \gamma -1, \ 1 \leq k \leq K_1,
    \end{align}
    and when $\gamma\notin \mathbb{Z}$,
    \begin{align}
	r_{k-1}^* & =
	\begin{cases}
		\floor{\gamma}-1, & 1 \leq k \leq K_1 \ceil{\gamma}-K, \\
		\floor{\gamma}, & K_1 \ceil{\gamma}-K < k \leq K_1. \\
	\end{cases} \label{eq:optimalPlacement_equation-2}
    \end{align}
\end{lemma}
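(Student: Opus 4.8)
The plan is to reduce the joint minimization to a pair of purely combinatorial balancing problems and then verify that the stated $r^*$ solves both at once.

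\emph{First step: reduce to sums of squares.} Fix $K_1$ and $K_2$ and rewrite the two source-AoIs. For source-$1$ each sub-pattern $\tilde P(1,k)$ consists of $r_k$ independent copies of $S_2$, so $s_{\tilde P(1,k)}=r_k s_2$ and $q_{\tilde P(1,k)}=r_k v_2+r_k^2 s_2^2$ by \eqref{func_frame}. Substituting these into \eqref{moments_cgaw} with $\alpha_P(1)=K_1$ gives $\tilde s_1=K_2 s_2/K_1$, which is independent of the placement, and $\tilde q_1=v_2K_2/K_1+(s_2^2/K_1)\sum_{k=0}^{K_1-1}r_k^2$. Since in \eqref{exp_2mom} the denominator $2(s_1+\tilde s_1)$ is then a positive constant and the coefficient of $\tilde q_1$ in the numerator is positive, $\mathbb{E}[\Delta_1]$ is a strictly increasing affine function of $\sum_{k=0}^{K_1-1}r_k^2$. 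Interchanging the roles of the two sources, $\mathbb{E}[\Delta_2]$ is likewise a strictly increasing affine function of $\sum_{k=0}^{K_2-1}z_k^2$, where $z$ is the source-$2$ placement vector determined by $P$ (equivalently, by $r$). Hence it suffices to exhibit one feasible $r$ that simultaneously minimizes $\sum_k r_k^2$ and $\sum_k z_k^2$; such an $r$ minimizes $w_1\mathbb{E}[\Delta_1^\pi]+w_2\mathbb{E}[\Delta_2^\pi]$ for every weight pair, which is the ``jointly minimizes'' assertion.

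\emph{Second step: $r^*$ minimizes $\sum_k r_k^2$.} Over length-$K_1$ vectors of non-negative integers with fixed sum $K_2$, $\sum_k r_k^2$ is minimized exactly by the balanced vectors, i.e.\ those whose entries all lie in $\{\floor{K_2/K_1},\ceil{K_2/K_1}\}$: this is the standard smoothing argument, since whenever $r_i\ge r_j+2$, replacing $(r_i,r_j)$ by $(r_i-1,r_j+1)$ strictly lowers the sum of squares. Writing $K_2=K_1\floor{K_2/K_1}+\rho$ with $0\le\rho<K_1$, a balanced vector has $\rho$ entries equal to $\ceil{K_2/K_1}$ and $K_1-\rho$ equal to $\floor{K_2/K_1}$. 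Using $\gamma=K/K_1=1+K_2/K_1$ one has $\floor{K_2/K_1}=\floor{\gamma}-1$, $\ceil{K_2/K_1}=\ceil{\gamma}-1$, $\rho=0$ precisely when $\gamma\in\mathbb{Z}$, and $K_1-\rho=K_1\ceil{\gamma}-K$ when $\gamma\notin\mathbb{Z}$; hence the vectors in \eqref{eq:optimalPlacement_equation-1}--\eqref{eq:optimalPlacement_equation-2} are exactly the balanced ones, with the low entries listed first, and they minimize $\sum_k r_k^2$.

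\emph{Third step: the induced $z$ also minimizes $\sum_k z_k^2$.} This is the crux. The map $r\mapsto z$ admits a clean cyclic description: each maximal run of source-$2$ transmissions has some length $r_k\ge 1$ and contributes $r_k-1$ zero entries to $z$ (two adjacent source-$2$'s) together with one entry equal to the number of source-$1$'s separating this run from the next source-$2$ run. When $\floor{\gamma}\ge 2$, i.e.\ $K_1\le K_2$, every entry of $r^*$ is at least $1$, so no two source-$1$'s are adjacent and each of these separating counts equals $1$; thus the induced $z$ consists of $K_1$ ones and $\sum_k(r_k^*-1)=K_2-K_1$ zeros, which is exactly the balanced length-$K_2$ vector summing to $K_1$, and therefore minimizes $\sum_k z_k^2$. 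The complementary case $K_1>K_2$ is handled symmetrically, parametrizing instead by the source-$2$ placement vector and reading off the induced source-$1$ vector. In every case one placement minimizes both sums of squares, and by the first step this proves the lemma.

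I expect the third step to be the main obstacle: it is the transfer of ``as balanced as the integer constraints allow'' from one source's inter-transmission vector to the other's — a discrete, cyclic analogue of the three-distance theorem — and the bookkeeping for the wrap-around gap and for source-$2$ runs of length greater than one has to be carried out carefully, and it is there that the relation between $K_1$ and $K_2$ (or the symmetrization) is genuinely used.
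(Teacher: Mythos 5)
Your proof is correct and follows essentially the same route as the paper's: reduce each $\mathbb{E}[\Delta_n]$ to an increasing affine function of the sum of squared placement entries, minimize $\sum_k r_k^2$ by the balanced vector, and observe that (for $K_1\le K_2$, where every $r_k^*\ge 1$) the induced source-$2$ placement is automatically the balanced $0/1$ vector, with the case $K_1>K_2$ handled by swapping roles. Your third step merely spells out in detail what the paper asserts as obvious, so no substantive difference.
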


\begin{proof}
Without loss of generality, let us assume $K_1 \leq K_2$. In order to find $\mathbb{E} [\Delta_1]$, we employ \eqref{exp_2mom}, \eqref{firstmoment_cgaw} and \eqref{func_frame}, for the special case of $N=2$, to first write
\begin{align}
   \tilde{s}_1=\frac{K_2 s_2}{K_1}, \ \tilde{q}_1 = \frac{K_2 v_2}{K_1} + \frac{s_2^2}{K_1} \sum\limits_{k=1}^{K_1} r_{k-1}^2,  
\end{align} 
in order to express the expected AoI of the first source $\mathbb{E} [\Delta_1]$ as,
\begin{align}
 \mathbb{E} [\Delta_1]  \!=\! \frac{K_1 (2s_1^2\!+\!q_1) + K_2(4 s_1 s_2\!+\!q_2\!-\!s_2^2) + s_2^2 \sum\limits_{k=1}^{K_1} r_{k-1}^2}
	{2(K_1 s_1 + K_2 s_2)}. \label{eq:optim_1}
\end{align}
To minimize \eqref{eq:optim_1}, it is sufficient to minimize the expression $\sum_{k=1}^{K_1} r_{k-1}^2$ under the constraint $\sum_{k=1}^{K_1} r_{k-1} = K_2$, one solution to which is given in \eqref{eq:optimalPlacement_equation-1}-\eqref{eq:optimalPlacement_equation-2}. Similarly, we can write the expected AoI of the second source
$\mathbb{E} [\Delta_2]$ as
\begin{align}
 \mathbb{E} [\Delta_2]  = \frac{K_2 (2s_2^2\!+\!q_2) + K_1(4 s_1 s_2\!+\!q_1\!-\!s_1^2) + s_1^2 \sum\limits_{k=1}^{K_2} z_{k-1}^2}
	{2(K_1 s_1 + K_2 s_2)}. \label{eq:optim_11}
\end{align}
Now, to minimize \eqref{eq:optim_11}, it is also sufficient to minimize the expression $\sum_{k=1}^{K_2} z_{k-1}^2$ under the constraint $\sum_{k=1}^{K_2} z_{k-1} = K_1$. Since $K_1 \leq K_2$ and $K_1+K_2=K$, we have $K/K_2 \leq 2$. Then, $\mathbb{E} [\Delta_2]$ is minimized if the entries in its placement vector are either 0 or 1 since $K/K_2 \leq 2$ and the number of 1s should be $K_1$. Obviously, such selection of placement vector $z$ for source-$2$ is already achieved with the selection of placement vector $r$ for source-$1$ as given in \eqref{eq:optimalPlacement_equation-1}-\eqref{eq:optimalPlacement_equation-2}. Thus, the selection of placement vector for source-$1$ in \eqref{eq:optimalPlacement_equation-1}-\eqref{eq:optimalPlacement_equation-2} simultaneously minimizes the expected ages of both sources, completing the proof.
\end{proof}

As an illustrative example for Lemma~\ref{lemma:1}, when the size is fixed to $K=7$ and the number of source-$1$ updates is fixed to $K_1=3$, the optimal policy is obtained as $\left(7, 3, [1,1,2]\right)$, whose transmission pattern is given as 
\begin{align}
P &= [ 1, 2, 1, 2, 1, 2, 2 ].  
\end{align}

From Lemma~\ref{lemma:1}, we focus only on policies $\pi = (K_1,K_2) \in \Pi^\prime, K_n \geq 1, n=1,2,$ that use the optimum placement vector, which is equivalent to the policy $\pi=(K_1, K_2, r)$ where the placement vector $r$ is employed according to Lemma~\ref{lemma:1}. 

The next lemma further narrows down the candidate set of optimum policies.

\begin{lemma} \label{theorem:optimalTwoSetPolicies}
    For a two-source C-GAW server, the optimum policy is either in the form $\pi = (K_1,1)$ or $(1,K_2)$.	
\end{lemma}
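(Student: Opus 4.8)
The plan is to start from an arbitrary policy $\pi=(K_1,K_2)$ that already uses the optimal placement of Lemma~\ref{lemma:1} (by that lemma nothing is lost in restricting to such policies) and to show that whenever $K_1\ge 2$ and $K_2\ge 2$, $\pi$ is matched or beaten by a policy of the form $(1,K)$ or $(K,1)$. After relabelling the sources so that $K_1\le K_2$, I would set $\lambda=K_2/K_1\ge 1$ and $m=\floor{\lambda}\ge 1$. By Lemma~\ref{lemma:1}, under the optimal placement each pair of consecutive source-$1$ updates is separated by either $m$ or $m+1$ source-$2$ updates, and --- since $K_1\le K_2$ --- each pair of consecutive source-$2$ updates is separated by only $0$ or $1$ source-$1$ update; the two degenerate policies $(1,m)$ and $(1,m+1)$ are precisely those in which the source-$1$ separations are all equal to $m$, resp.\ $m+1$.

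Next I would substitute these structural facts into \eqref{exp_2mom}. Writing $R\in\{m,m+1\}$ for the number of source-$2$ updates falling between two successive source-$1$ updates within a cycle (so $\mathbb{E}[R]=\lambda$ and $\mathbb{E}[R^2]=\tfrac1{K_1}\sum_{k=1}^{K_1}r_{k-1}^2$), one obtains $\tilde s_1=\lambda s_2$, $\tilde q_1=\mathbb{E}[R^2]s_2^2+\lambda v_2$, $\tilde s_2=s_1/\lambda$ and $\tilde q_2=q_1/\lambda$. The key observation is that after this substitution $\mathbb{E}[\Delta_1^\pi]$ and $\mathbb{E}[\Delta_2^\pi]$ share the common denominator $2(s_1+s_2\lambda)$, and that $\mathbb{E}[\Delta_2^\pi]$ depends on $\pi$ only through $\lambda$. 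Hence $\mathbb{E}[\Delta^\pi]=w_1\mathbb{E}[\Delta_1^\pi]+w_2\mathbb{E}[\Delta_2^\pi]$ is a single fraction whose denominator is $2(s_1+s_2\lambda)$ and whose numerator is affine in $\lambda$ and in $\mathbb{E}[R^2]$.

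Then I would exploit that $R$ is $\{m,m+1\}$-valued, which forces $\mathbb{E}[R^2]=m^2+(2m+1)(\lambda-m)$; so the numerator is in fact affine in the single variable $\lambda\in[m,m+1]$, giving $\mathbb{E}[\Delta^\pi]=F(\lambda)$ for a M\"obius function $F(x)=\tfrac{ax+b}{cx+d}$ with $cx+d=2(s_1+s_2x)$. Its only pole is at $x=-s_1/s_2<0$, so $F$ is monotone on $[m,m+1]$; since $F(m)=\mathbb{E}[\Delta^{(1,m)}]$ and $F(m+1)=\mathbb{E}[\Delta^{(1,m+1)}]$, this yields $\mathbb{E}[\Delta^\pi]=F(\lambda)\ge\min\{\mathbb{E}[\Delta^{(1,m)}],\mathbb{E}[\Delta^{(1,m+1)}]\}$. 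Undoing the relabelling, every policy with $K_1,K_2\ge 2$ is dominated by one of the form $(1,K)$ or $(K,1)$, which proves the lemma.

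I expect the delicate step to be the claim that, once Lemma~\ref{lemma:1}'s placement is imposed, $\mathbb{E}[\Delta_2^\pi]$ is a function of $\lambda$ alone rather than of $\mathbb{E}[R^2]$ as well: this is exactly what makes the two source-ages share a denominator and collapses $\mathbb{E}[\Delta^\pi]$ to a M\"obius function of the single variable $\lambda$, and it rests on the inequality $K_1\le K_2$, which forces the source-$1$ gaps seen by source-$2$ to take only the values $0$ and $1$. The remaining items --- that $\mathbb{E}[R^2]$ is affine in $\lambda$ on $[m,m+1]$, that $F(m)$ and $F(m+1)$ reproduce the boundary policies $(1,m)$ and $(1,m+1)$, and that the pole of $F$ lies outside $[m,m+1]$ --- are routine verifications.
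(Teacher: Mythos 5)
Your proof is correct and follows essentially the same route as the paper's: after imposing the optimal placement of Lemma~\ref{lemma:1}, both arguments reduce the policy $(K_1,K_2)$ to a comparison with the two boundary policies $(1,m)$ and $(1,m{+}1)$, $m=\floor{K_2/K_1}$, and conclude that one of these does at least as well. The paper packages the final step as the exact convex-combination identity \eqref{eq:aoi_decom_a}, whereas you obtain it from monotonicity of a M\"obius function of $\lambda=K_2/K_1$ on $[m,m+1]$; these are equivalent statements about the same linear-fractional dependence, and your write-up has the minor merit of making explicit the facts the paper leaves implicit (that $\tilde q_2=q_1/\lambda$ because $z_k\in\{0,1\}$ when $K_1\le K_2$, and that $\mathbb{E}[R^2]$ is affine in $\lambda$ under the optimal placement).
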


\begin{proof}
    Let us be given a policy $\pi =(K_1,K_2) \in \Pi^\prime$ with $K=K_1+K_2$, $K_1 \leq K_2$ and $\gamma=K/K_1$ with the placement vector $r$ obtained according to Lemma~\ref{lemma:1}. When $\gamma \in \mathbb{Z}$, then the policy $(K_1,K_2)$ is already equivalent to the policy $(1,\gamma-1)$. 
    
    Now, consider the case 
    $\gamma \notin \mathbb{Z}$. Let us define two auxiliary policies $\pi_a = \left(1,\floor{\gamma} - 1\right)$ and $\pi_b = \left(1,\floor{\gamma}\right)$. The system AoI obtained with the policy $\pi$ can be written as,
    \begin{align}
	\mathbb{E} [\Delta^{\pi}] 	&= \frac
	{t_a \mathbb{E} [\Delta^{\pi_a}] + t_b \mathbb{E} [\Delta^{\pi_b}]}
	{t_a + t_b}, \label{eq:aoi_decom_a}
    \end{align}
    where
    \begin{align}
     t_a &= \left(s_2 (\floor{\gamma}-1)+s_1\right) \left(K_1\ceil{\gamma}-K\right), \\
	t_b &= \left(s_2 (\ceil{\gamma}-1)+s_1\right) \left(K_1-K_1\ceil{\gamma}+K\right).
    \end{align}
    Note that \eqref{eq:aoi_decom_a} immediately reveals that either $\pi_a$ or $\pi_b$ should give a lesser system AoI than the original policy $\pi=(K_1,K_2)$ from which we conclude that the optimum policy must be in the form $(1,K_2)$. When $K_1 > K_2$, one can swap the sources and follow the same lines of the above proof to show that the optimum policy can alternatively be in the form $(K_1,1)$. 
\end{proof}

Let us first focus only on the policies of the form $\pi=(K_1,1)$ and apply convex optimization to find the best value of $K_1$, denoted by $K_1^*$, that minimizes the system AoI.  For this purpose, the mean AoI for each of the two sources obtained with policy $(K_1,1)$ are first written in terms of the parameter $K_1$ using \eqref{exp_2mom} (see also \eqref{eq:optim_1} and \eqref{eq:optim_11}) as,
\begin{align}
    \mathbb{E} [\Delta_1] &= \frac{\ K_1 (2s_1^2 + q_1) + 4 s_1 s_2 + q_2}{2(K_1 s_1 + s_2)}, \label{nail31} \\ 
    \mathbb{E} [\Delta_2] &= \frac{K_1^2 s_1^2 + K_1 (4 s_1 s_2 + q_1- s_1^2) + 2 s_2^2 + q_2}{2(K_1 s_1 + s_2)}. \label{nail32}  
\end{align}
Relaxing the integer nature of $K_1$, we write the first and second derivatives of the system AoI as a function of $x\in \mathbb{R}$ after replacing each occurrence of $K_1$ in the system AoI expression by the real-valued variable $x$,  
\begin{align}
    \dv{x} \mathbb{E}[\Delta] &=  \frac{w_2 s_1}{2} - \frac{w_2 s_1 \psi_1}{2(s_2 + x s_1)^2}, \label{Step3_firstOrderDer} \\
    \dv[2]{x} \mathbb{E}[\Delta]&=  \frac{w_2 s_1^2 \psi_1}{(s_2 + x s_1)^3}, \label{Step3_secOrderDer}
\end{align}
where
\begin{align}
 \psi_1 = \frac{1}{s_1 w_2} (s_1 q_2 -q_1s_2+ (w_1 +1)s_1^2 s_2-w_2 s_2^2s_1). \label{psi_1} 
\end{align}

We note that if $\psi_1 < 0$, then from \eqref{Step3_firstOrderDer}, the function $\mathbb{E}[\Delta]$ is a monotonically increasing function of $x$  in which case $K_1^*=1$. On the other hand, if $\psi_1$ is positive, then from \eqref{Step3_secOrderDer}, the function $\mathbb{E}[\Delta]$ is a convex function of $x$, and using the KKT optimality conditions \cite{boyd2004convex}, we find the value of the parameter $x$, denoted by $x^*$, for which the expression in \eqref{Step3_firstOrderDer} when evaluated at $x^*$ becomes zero. Consequently, we obtain
\begin{align}
x^*= \frac{\sqrt{\psi_1}-s_2}{s_1}. \label{eq:optimalPolicyParameter_2}
\end{align}
If $x^*\leq 1$ which occurs when $\psi_1 \leq (s_1+s_2)^2$, then $K_1^*=1$. Otherwise, if $x^* \in \mathbb{Z}$, then $K_1^*=x^*$ and if $x^* \notin \mathbb{Z}$ then $K_1^*$ is either $\floor{x^*}$ or $\ceil{x^*}$ depending on which of the two choices yields a lower system AoI. 

We now repeat the same analysis for policies in the form of $\pi=(1,K_2)$ and obtain $K_2^*$, that minimizes the system AoI for all such policies. 
For this purpose, we write the mean AoI for the two sources obtained with the policy $(1,K_2)$ using \eqref{exp_2mom} (see also \eqref{eq:optim_1} and \eqref{eq:optim_11}) as,
\begin{align}
    \mathbb{E} [\Delta_1] &= \frac{K_2^2 s_2^2 + K_2 (4 s_1 s_2 + q_2- s_2^2) + 2 s_1^2 + q_1}{2(K_2 s_2 + s_1)}, \label{nail33} \\
    \mathbb{E} [\Delta_2] &= \frac{\ K_2 (2s_2^2 + q_2) + 4 s_1 s_2 + q_1}{2(K_2 s_2 + s_1)}, \label{nail34}  
\end{align}
Relaxing the integer nature of $K_2$, we write the first and second derivatives of the system AoI as a function of $y\in \mathbb{R}$ upon replacing each occurrence of $K_2$ in the system AoI expression by the real-valued variable $y$,  
\begin{align}
    \dv{y} \mathbb{E}[\Delta] &=  \frac{w_1 s_2}{2} - \frac{w_1 s_2 \psi_2}{2(s_1 + y s_2)^2}, \label{Step3_firstOrderDer2} \\
    \dv[2]{y} \mathbb{E}[\Delta]&=  \frac{w_1 s_2^2 \psi_2}{(s_1 + y s_2)^3}, \label{Step3_secOrderDer2}
\end{align}
where
\begin{align}
 \psi_2 = \frac{1}{s_2 w_1} ( s_2 q_1 -q_2s_1+ (w_2 +1)s_2^2 s_1-w_1 s_1^2 s_2 ). \label{psi_2} 
\end{align}
Similar to the previous analysis, if $\psi_2 < 0$, from \eqref{Step3_firstOrderDer2}, the function $\mathbb{E}[\Delta]$ is a monotonically increasing function of $y$ in which case $K_2^*=1$. If $\psi_2$ is positive, then from \eqref{Step3_secOrderDer2}, the function $\mathbb{E}[\Delta]$ is a convex function of $y$. Consequently, we obtain the value of the parameter $y$, denoted by $y^*$, for which the expression \eqref{Step3_firstOrderDer2} when evaluated at $y^*$ becomes zero, 
\begin{align}
y^*= \frac{\sqrt{\psi_2}-s_2}{s_1}. \label{eq:optimalPolicyParameter_22}
\end{align}
If $y^*\leq 1$ which occurs when $\psi_2 \leq (s_1+s_2)^2$, then $K_2^*=1$. Otherwise, if $y^* \in \mathbb{Z}$, then $K_2^*=y^*$ and when $y^* \notin \mathbb{Z}$, then $K_2^*$ is either $\floor{y^*}$ or $\ceil{y^*}$ depending on which of the two choices yields a lower system AoI. Finally, the optimum policy is either $(K_1^*,1)$ or $(1,K_2^*)$, depending on which one gives rise to a lower system AoI.

Theorem~\ref{thm:thm1} below provides a closed-form expression for the optimum C-GAW server for the case of two sources.

\begin{theorem} \label{thm:thm1}
    Consider the system AoI minimization problem for the two-source C-GAW server with the expressions for the parameters $\psi_1$, $x^*$, $\psi_2$, and $y^*$ given in equations \eqref{psi_1}, \eqref{eq:optimalPolicyParameter_2}, \eqref{psi_2}, and \eqref{eq:optimalPolicyParameter_22}, respectively. 
    The inequality $\psi_n >  (s_1+s_2)^2$ cannot hold for the two sources simultaneously. If $\psi_n \leq  (s_1+s_2)^2$, $n=1, 2$, then the round robin policy $(1,1)$ is the optimum policy. If $\psi_1 > (s_1+s_2)^2$, then $(1,\floor{x^*})$ or $(1,\ceil{x^*})$ is the optimum policy depending on which one results in a lower system AoI using \eqref{nail31} and \eqref{nail32}. Similarly, if $\psi_2 > (s_1+s_2)^2$, then $(\floor{y^*},1)$ or $(\ceil{y^*},1)$ is the optimum policy depending on which one results in a lower system AoI using \eqref{nail33} and \eqref{nail34}.  
\end{theorem}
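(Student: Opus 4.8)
The plan is to read the theorem off the convexity analysis already developed above its statement, isolating the one genuinely new ingredient: that the conditions $\psi_1 > (s_1+s_2)^2$ and $\psi_2 > (s_1+s_2)^2$ cannot hold simultaneously.

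First I would invoke Lemma~\ref{theorem:optimalTwoSetPolicies}, by which an optimum policy lies in one of the two one-parameter families $(K_1,1)$, $K_1\ge 1$, or $(1,K_2)$, $K_2\ge 1$, each equipped with the optimal placement vector of Lemma~\ref{lemma:1}; it therefore suffices to minimize the system AoI over each family and keep the better. Over the family $(K_1,1)$, relax $K_1$ to $x\in\mathbb{R}$ in \eqref{nail31}--\eqref{nail32}; the first and second derivatives of $\mathbb{E}[\Delta]$ are \eqref{Step3_firstOrderDer}--\eqref{Step3_secOrderDer}, whose signs are governed by $\psi_1$. If $\psi_1\le 0$ the relaxed objective is nondecreasing; if $0<\psi_1\le (s_1+s_2)^2$ it is convex with unconstrained minimizer $x^*$ of \eqref{eq:optimalPolicyParameter_2} satisfying $x^*\le 1$; in either case the best feasible integer is $K_1=1$. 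If $\psi_1>(s_1+s_2)^2$, the relaxed objective is convex with $x^*>1$, so by unimodality of a convex function over the integers the best choice is $\floor{x^*}$ or $\ceil{x^*}$, whichever makes $w_1\mathbb{E}[\Delta_1]+w_2\mathbb{E}[\Delta_2]$ smaller in \eqref{nail31}--\eqref{nail32} (note $x^*\in\mathbb{Z}$ is subsumed since then $\floor{x^*}=\ceil{x^*}$). The family $(1,K_2)$ is handled identically with $\psi_2$, $y^*$, and \eqref{nail33}--\eqref{nail34} in place of $\psi_1$, $x^*$, and \eqref{nail31}--\eqref{nail32}. Since $(1,1)$ is the single policy common to the two families, a family whose best integer equals $1$ contributes exactly the round-robin policy.

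Next comes the mutual-exclusivity claim. Using \eqref{psi_1} and the positivity of $s_1 w_2$, the inequality $\psi_1>(s_1+s_2)^2$ is equivalent to a polynomial inequality in $s_1,s_2,q_1,q_2,w_1,w_2$; likewise $\psi_2>(s_1+s_2)^2$ is equivalent, via \eqref{psi_2}, to the polynomial inequality obtained from it by the relabeling $1\leftrightarrow 2$. Adding the two inequalities, the $q_1$- and $q_2$-terms cancel identically, and after collecting the remaining terms (all cubic in the $s_n$) the left-hand side collapses to $-(s_1+s_2)\left(w_2 s_1^2+w_1 s_2^2\right)$, which is strictly negative because all $s_n,w_n>0$. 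Hence the sum of the two inequalities is impossible, so at most one of $\psi_1>(s_1+s_2)^2$, $\psi_2>(s_1+s_2)^2$ can hold.

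Finally I would assemble the cases. If $\psi_n\le (s_1+s_2)^2$ for $n=1,2$, each family is minimized at the integer $1$, so the round-robin policy $(1,1)$ is optimal. If $\psi_1>(s_1+s_2)^2$, then by mutual exclusivity $\psi_2\le(s_1+s_2)^2$, so the $(1,K_2)$ family contributes only $(1,1)$, while the $(K_1,1)$ family contributes $(\floor{x^*},1)$ or $(\ceil{x^*},1)$ with $x^*>1$; since $\floor{x^*}\ge 1$ and, by convexity, the relaxed value at $\floor{x^*}$ does not exceed that at $1$, the integer optimum of this family already dominates $(1,1)$, and the overall optimum is whichever of $(\floor{x^*},1)$, $(\ceil{x^*},1)$ gives the smaller system AoI computed from \eqref{nail31}--\eqref{nail32}. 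The case $\psi_2>(s_1+s_2)^2$ is the mirror image, using \eqref{nail33}--\eqref{nail34}. The main obstacle is the bookkeeping in the mutual-exclusivity step: one must expand both polynomial inequalities carefully, keeping the weight normalization behind \eqref{psi_1}--\eqref{psi_2} in mind, so that the cancellation of the $q_n$-terms and the final negative sign become transparent; everything else is immediate from the convexity already in hand.
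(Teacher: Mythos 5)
Your proof is correct and takes essentially the same route as the paper: the convexity/KKT analysis over the two one-parameter families singled out by Lemma~\ref{theorem:optimalTwoSetPolicies}, plus the mutual-exclusivity step, which in the paper rests on the identity $\psi_1 w_2 s_1 + \psi_2 w_1 s_2 = s_1 s_2 (s_1+s_2)$ together with $w_1+w_2=1$ --- algebraically the same cancellation you obtain by adding the two cleared inequalities and collapsing the residue to $-(s_1+s_2)(w_2 s_1^2+w_1 s_2^2)<0$. (Your writing of the optimizers as $(\floor{x^*},1)$ and $(1,\floor{y^*})$ is the orientation consistent with the derivations of \eqref{nail31}--\eqref{nail32} and \eqref{nail33}--\eqref{nail34}; the swapped tuples in the theorem statement appear to be a typo, not an error on your part.)
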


\begin{proof}
  We first prove by contradiction that $\psi_n >  (s_1+s_2)^2$ cannot hold for both values of $n$. Let us first assume, 
  \begin{align}
   \psi_n & = a_n (s_1+s_2)^2, \label{nail35}
  \end{align} 
  where $a_n >1$ for $n=1,2$. It is not difficult to show using \eqref{psi_1} and \eqref{psi_2} that the following holds,
  \begin{align}
    \psi_1 w_2 s_1 + \psi_2 w_1 s_2 & = s_1 s_2 (s_1+s_2). \label{nail36} 
  \end{align}
  Substituting \eqref{nail35} in the expression in \eqref{nail36}, we obtain
  \begin{align}
      \frac{a_2 (s_1 + s_2)}{s_1} w_1 + \frac{a_1 (s_1 + s_2)}{s_2} w_2 & = 1, \label{nail37}
  \end{align}
  which cannot hold since $w_1 + w_2=1$, and therefore, the left hand side of \eqref{nail37} must be strictly larger than $1$ contradicting the assumption \eqref{nail35}. The rest of the proof simply follows from the derivations from \eqref{nail31} to \eqref{eq:optimalPolicyParameter_22}.
\end{proof}

From \eqref{psi_1} and \eqref{psi_2}, it is evident that as $w_1$ (resp. $w_2$) increases, $\psi_1$  (resp. $\psi_2$) increases monotonically, while $\psi_2$ (resp. $\psi_1$) decreases monotonically. When we combine this finding with Lemma~\ref{theorem:optimalTwoSetPolicies}, we observe that if the optimal policy takes the form $(K_1, 1)$, further increasing $w_1$ leads to a higher $K_1$, i.e., more transmissions for source-$1$ in the transmission pattern of the optimal policy. Conversely, if $w_1$ decreases, $K_1$ also decreases until it reaches one, overlapping with the round robin (RR) policy. Further decrease of $w_1$ results in the optimal policy taking the form $(1, K_2)$ with increasing $K_2$.

\section{Insertion Search (IS) Scheduling Algorithm} \label{sec:Nsources}
In the previous section, we obtained the optimum cyclic schedule for $N=2$ sources for C-GAW. Optimum scheduling for general $N>2$ sources for C-GAW remains an open problem. In this section, we propose a heuristic search-based scheduler called Insertion Search (IS) for general number of sources. For the description of the IS algorithm, we first define the insertion operator $g(P,n,k)$, for $1\leq n \leq N$, $0 \leq k < K$, which maps the original frame $P$ with size $K$, to a new pattern $P'$ with size $K+1$, which is obtained by inserting a source-$n$ transmission before the $k$th element of $P$. As an example, let $P$ be given as in \eqref{example1}. Then, $g(P,1,6)$ is obtained by inserting a source-1 transmission just before the 6th entry of $P$,
\begin{align}
g(P,1,6) & =[ 3, 1, 2, 3, 1, 3, 1, 2].    \label{example2}
\end{align}
Given this definition, we first start with the round robin (RR) policy (sources are served one after the other, in a repeating sequence) as our initial policy which results in an initial pattern $P$ such that $P_i = i+1$, $0 \leq i \leq N-1$ with pattern size $N$ which is the only feasible cyclic schedule with pattern size less than or equal to $N$. This stems from the observation that patterns with size strictly less than $N$ lead to unbounded system AoI, and the RR policy is the only policy with pattern size $N$. 
Subsequently, at each iteration of our proposed algorithm, our goal is to find a pattern $P'$ with size $K+1$, using the pattern $P$ with size $K$, through the use of insertion operations only.
At this stage, we find the best such pattern $P'$ in terms of system AoI by searching through each of the sources and their insertion position. This step requires the computation of system AoI for less than $NK$ patterns since the patterns $g(P,n,k)$ and $g(P,n,k+1)$ (modulo $K$) are equivalent  when $P_k =n$. 
Particularly, this process is terminated if the maximum size $K_{max}$ is reached or the system AoI cannot be improved for $y$ consecutive pattern sizes, where the integer $y$ is an input to the IS algorithm. For the special case of $y=1$, the process is terminated if we cannot reduce the system AoI in a given iteration. In the case of general $y$, although we may not improve the system AoI by increasing the pattern size by one, we continue the insertion search process for a total of $y$ steps. The pseudo-code for IS is given in Alg.~\ref{alg:WFS_pseudo} that outputs a pattern $P^*$ of size $K^*$ which can be at most $K_{max}$, where we use the notation $SAoI(P)$ to denote the system AoI of pattern $P$ which can be calculated by following the procedure detailed in Section~\ref{sec:analysis}. 
In Alg.~\ref{alg:WFS_pseudo}, at the end of an iteration involving a specific pattern size, $CSAoI$ stands for the minimum system AoI attained with the use of the pattern $\tilde{P}$, obtained for the current pattern size, whereas $P^*$ holds the pattern that yields the minimum system AoI, denoted by $SAoI$,  for all pattern sizes tried until the current pattern size. 
The parameter $s$ holds the number of consecutive frame sizes for which the system AoI cannot be improved 
and the while loop is terminated when $s=y$ or the pattern size becomes $K_{max}$.
$P^*$ is the output pattern of Alg.~\ref{alg:WFS_pseudo} and $K^*$ is the output pattern size, which are returned at the end of the while loop.

\begin{algorithm}[tb]
	\caption{Pseudo-code for IS}
	\begin{algorithmic}
    \renewcommand{\algorithmicrequire}{\textbf{Input:}}
    \renewcommand{\algorithmicensure}{\textbf{Output:}}
		\REQUIRE $N$, $w_n$, $s_n$, $q_n$, $1 \leq n \leq N$, $K_{max}$, $y$;
        \ENSURE $P^*$, $K^*$;
		\STATE $K \gets N$;
        \STATE $K^* \gets K$;
        \STATE $P \gets [1,2,\ldots,N] $;
        \STATE $P^* \gets P$;
        \STATE $\tilde{P} \gets P$;
        \STATE  $SAoI \gets SAoI(P)$;
        \STATE  $s \gets 0$;   
        \WHILE{$s < y$ and $K < K_{max}$}     
            \STATE $CSAoI \gets \infty $;
         	\STATE $K \gets K+1$;
          \STATE $s \gets s + 1$;
                \STATE $P \gets \tilde{P}$;
                \FOR{$n=1$ \TO $N$} 
                    \FOR {$k=0$ \TO $K-1$}
                        \IF {$P_k \neq n$}
                            \STATE $P' \gets g(P,n,k)$;
                            \STATE $SAoI' \gets SAoI(P')$;
                    
                            \IF {$SAoI' < CSAoI$} 
                                \STATE $CSAoI \gets SAoI'$;
                                \STATE $\tilde{P} \gets P'$;
                            \ENDIF
                            
                            \IF {$SAoI' < SAoI$} 
                                \STATE {$P^* \gets P'$};
                                \STATE {$\tilde{P} \gets P'$};
                                \STATE $SAoI \gets SAoI'$;
                                \STATE $K^* \gets K$;
                                \STATE $s \gets 0$;
                            \ENDIF
                        
                        \ENDIF
                    \ENDFOR
                \ENDFOR
    		
        \ENDWHILE
	\end{algorithmic}
	\label{alg:WFS_pseudo}
\end{algorithm}

Note that IS provides the optimum solution for $N=2$ from Theorem~\ref{thm:thm1}. To see this, if the optimum solution is $(1,1)$ then IS will start from the pattern $[1,2]$ and will choose not to insert a source since the system AoI would be higher upon an insertion. Otherwise, if the solution is in the form $(K_1^*,1)$ (resp. $(1,K_2^*))$, then from Theorem~\ref{thm:thm1} and Lemma~\ref{lemma:1}, it will never be possible to insert a source-2 (resp. source-1) transmission in order to reduce the system AoI at any step of the IS algorithm. Hence, IS will eventually find the optimum solution for $N=2$ using linear search rather than the closed form expression given in Theorem~\ref{thm:thm1}. The advantage in using IS stems from its applicability to general $N > 2$ scenarios. 

Recall that the runtime complexity of IS is $O(1)$. However, the complexity of Alg.~\ref{alg:WFS_pseudo} involving the offline pattern construction phase is crucial in determining the size of the networks IS can handle. 
For this purpose, note that finding the system AoI of a feasible pattern requires $O(NK)$ floating point operations for a pattern with size $K$ and $N$ sources, from the equations \eqref{exp_2mom}, \eqref{func_frame}, \eqref{firstmoment_cgaw} and \eqref{secondmoment_cgaw}. On the other hand, the IS algorithm needs to find the system AoI for each candidate source for insertion and the associated insertion point. Therefore, the IS algorithm requires at most $O(N^2K^2)$ operations when the IS algorithm is at the phase of switching from size $K$ to $K+1$. Let $K^*$ denote the optimum  pattern size found by the IS algorithm when $K_{max} \rightarrow \infty$. Therefore, the overall offline computational complexity of IS is $O(N^2 \bar{K}^3)$ where $\bar{K} = \min(K^*,K_{max})$, which is to be compared against an Exhaustive Search (ES) algorithm which requires a selection from $N^K$ patterns for finding the minimum system AoI pattern with size $K$. 
However, note that since $\bar{K}>N$, IS has a computational complexity exceeding $O(N^5)$ which may limit its use in massive connectivity scenarios involving thousands of sources. 

\section{Numerical Examples} \label{sec:numerical}
\subsection{C-GAW with Two Sources}
In this subsection, we focus on the case $N=2$ for which we obtained the optimum cyclic scheduler in Section~\ref{sec:twosources} and in Theorem~\ref{thm:thm1}, denoted by C-GAW$^*$. As benchmark policies, we compare  C-GAW$^*$ against RR and the optimum P-GAW policy, denoted by P-GAW$^*$, for which 
we propose to solve the system AoI using the technique described in Section~\ref{sec:analysis} and employ a one-dimensional exhaustive search over the transmission probability $p_1$ that minimizes the system AoI.

In the first example, we consider the GAW model with exponentially distributed service times, i.e., $q_n = 2 s_n^2$, $n=1,2,$ for the case $s_1=5$, $w_1=0.8$ and $w_2=0.2$. The system AoI is depicted in Fig.~\ref{figure:Example_1}(a) as a function of $s_2$  for the three cases: P-GAW$^*$, RR, C-GAW$^*$. Fig.~\ref{figure:Example_1}(a) reveals that the system AoI obtained with RR is lower than that of the optimum probabilistic P-GAW$^*$ server for a wide range of $s_2$ values despite the fact that RR server does not take into consideration the individual source weights or service time statistics. Moreover, the system AoI obtained with the C-GAW$^*$ server is consistently lower than that of the other two servers and this performance gap also grows with increasing $s_2$. We conclude from this example that it is beneficial to use cyclic scheduling when compared to probabilistic and RR scheduling for weighted sum AoI minimization.  When C-GAW$^*$ deviates from RR, we let $n'$ denote the source which requires $K_{n'}^* > 1$ transmissions for the C-GAW$^*$ server, whereas when they are equal, we fix $n'=1$ and  $K_{n'}^* = 1$. The parameter $p_1^*$ for P-GAW$^*$ and the parameter $K_{n'}^*$ for C-GAW$^*$ are depicted in Fig.~\ref{figure:Example_1}(b) which demonstrates that we need to have increasingly more source-$1$ appearances in the optimum transmission pattern as the channel quality of source-$2$ worsens.

 \begin{figure*}[t]
	\centering
  \includegraphics[width=0.8\textwidth]{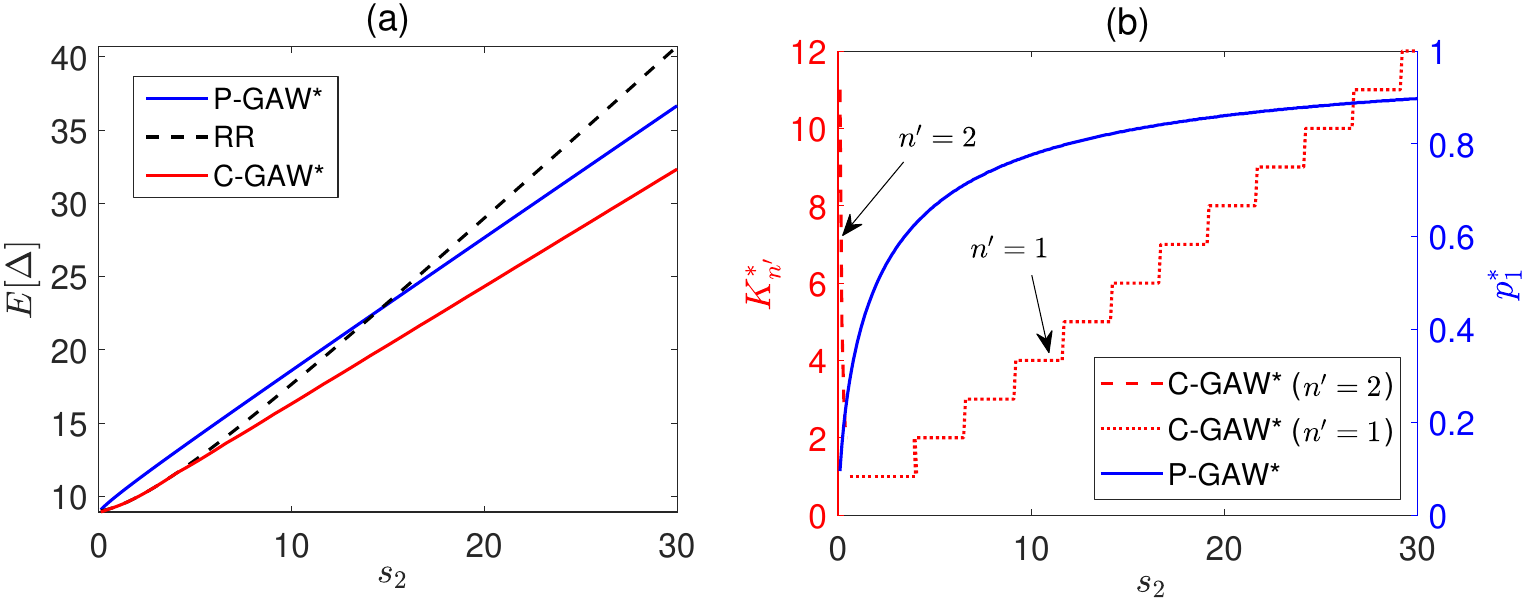}
	\caption{(a) System AoI $\mathbb{E} [\Delta]$ (b) $p_1^*$ for P-GAW$^*$ and $K_{n'}^*$ for C-GAW$^*$, depicted as a function of $s_2$ for exponentially distributed service times when $w_1 = 0.8, w_2=0.2, s_1=5.$}
	\label{figure:Example_1}
\end{figure*}

\begin{figure*}[t]
  \centering
  \includegraphics[width=0.8\textwidth]{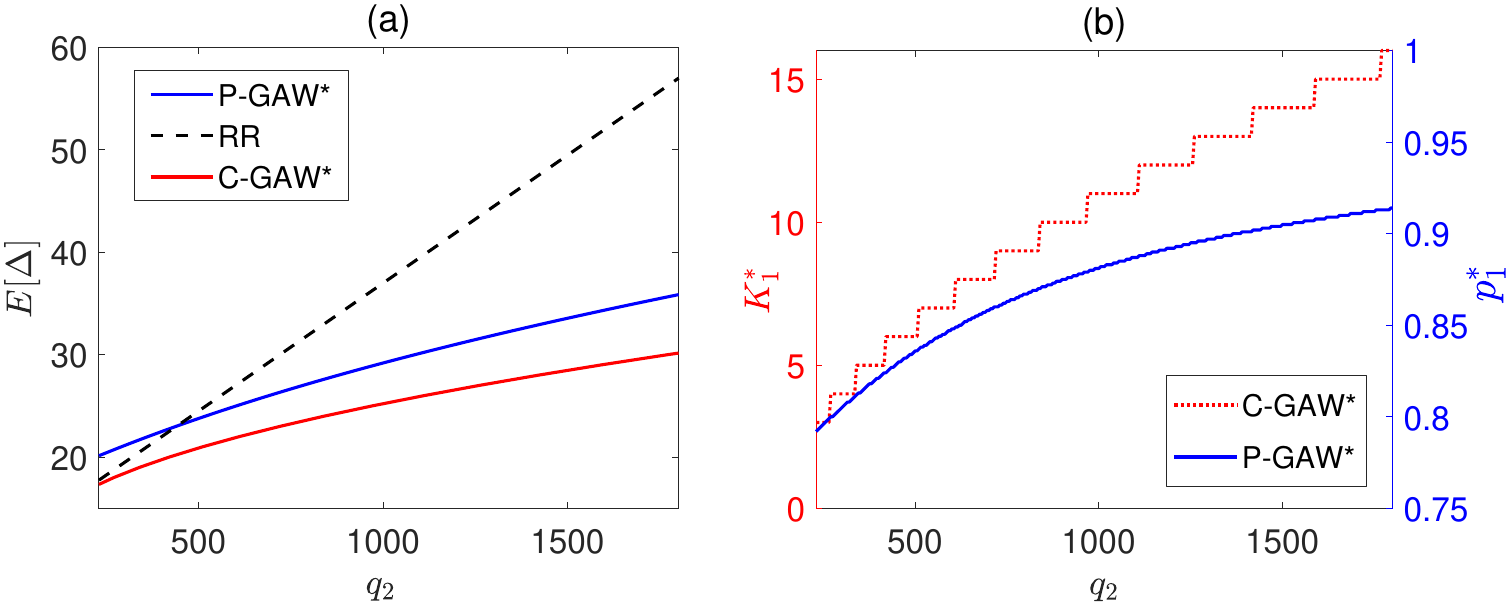}
  \caption{(a) System AoI $\mathbb{E} [\Delta]$ (b) $p_1^*$ for P-GAW$^*$ and $K_1^*$ for C-GAW$^*$ ($n'=1$ for this example in all cases), depicted as a function of $q_2 \geq 225$, when $w_1 = 0.8, w_2=0.2, s_1=5, s_2=15$. }
  \label{figure:Example_2}
\end{figure*}

In the second example, we use the same weights as before and exponentially distributed service times for source-1 with $s_1 = 5$. However, the service times for source-$2$ are now generally distributed with mean $s_2 =15$. The system AoI is plotted in Fig.~\ref{figure:Example_2}(a) for the three cases: P-GAW$^*$, RR, C-GAW$^*$, and the parameters $p_1^*$ and $K_{1}^*$ are plotted in  Fig.~\ref{figure:Example_2}(b), as a function of the second moment of the source-2 service time, $q_2$, for all values of which $n'=1$. We observe that the C-GAW$^*$ server progressively favors source-1, i.e., $K_{1}^*$ is increased, for increasing $q_2$ values. We also observe the adverse effect of the second moment of service times on system AoI and also its impact on the optimum policy. Moreover, C-GAW$^*$ server consistently outperforms the other two and the performance gap increases with increased $q_2$ values. This example illustrates the importance of second moments of the service times on the AoI performance in addition to the first moments.

\subsection{Comparison of IS with Existing Age-Agnostic Schedulers}
In this subsection, we turn our attention to the general $N>2$ case for which the optimum solution is not available for general heterogeneous scenarios. In the first example, we study the impact of heterogeneity in mean service times and also in weights, on the performance of the IS algorithm. For this purpose, we investigate the system AoI as a function of the iteration index $K$ for the IS algorithm, starting from $K=N$, i.e., RR policy.  We fix the IS algorithm parameter $y=1$  which subsequently produces the pattern size $K^*$ above which a performance improvement is not observed by increasing the pattern size by one. We fix $K_{max} \rightarrow \infty$ and $N=50$ for this numerical example. We use five different mean service times representative of five different MCS values in the cellular network. In particular, in scenario A, $s_n \in \{ 1,2,3,4,5 \}$, and in the more heterogeneous scenario B, $s_n \in \{ 1,2,4,8,16 \}$. When $n \mod 5 =i$, then $s_n$ takes the $i$th value of the corresponding set for $0\leq i < 5$. The scov parameter is fixed for all sources, i.e., $c_n=c, \ 1 \leq n \leq N$, and two different choices of the parameter $c=0,1$ are studied. We assume a \emph{linearly spaced weights} model where $w_i=w_{i-1}+\delta, \ 2 \leq i \leq N, \  \delta>0$ with the ratio of the maximum weight to the minimum weight, $r=\frac{w_N}{w_1}$, is fixed to either $5$ or $50$. Larger values of the ratio parameter $r$ are indicative of increased heterogeneity in the weights. The results are given in Figs.~\ref{fig:IS1}(a) and \ref{fig:IS1}(b), for scenarios A and B, respectively. In all cases, we have observed significant reduction in the system AoI using IS with respect to the RR policy which is known to be optimum for the homogeneous network scenario.  However, when heterogeneity is lower in weights, i.e., $r=5$, the performance gain of using IS is rather limited and  $K^*$ is closer to $N$, i.e., IS policy is closer to RR. Deviation from RR becomes more apparent with increased heterogeneity in weights in which case $K^* \gg N$. The increased scov value $c$ leads to an increase in system AoI but also a slight increase in the IS pattern size.
When the heterogeneity is increased in the mean service times in Fig.~\ref{fig:IS1}(b), the IS pattern sizes are also larger and deviation from the RR policy becomes even more significant. In particular, for scenario B and when $r=50$,  $K^*=167$ (resp. $K^*=118$) when $c=1$ (resp. $c=0$), leading to around  
$20\%$ (resp. $19 \%$) reduction in the system AoI with IS in comparison to RR.  

\begin{figure*}[t]
  \centering
  \includegraphics[width=0.8\textwidth]{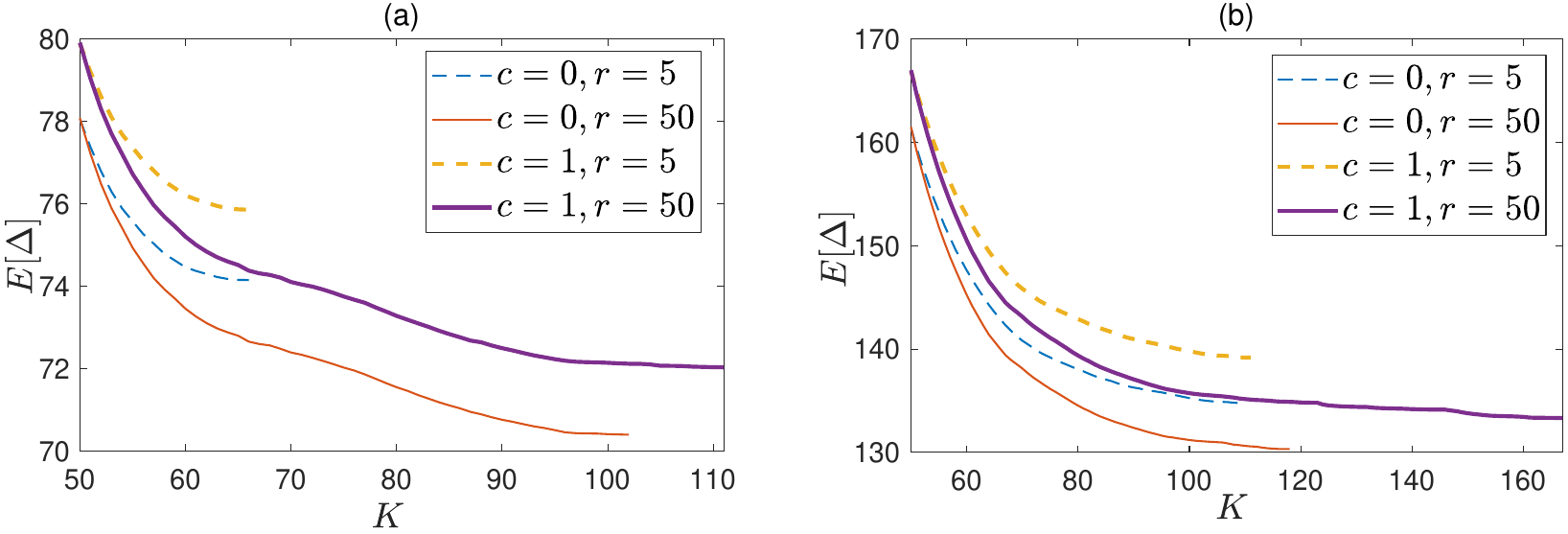}
  \caption{System AoI as a function of the iteration index $K$ in the IS-$1$ algorithm when $N=50$ and for two values of the fixed scov parameter $c$ and the ratio parameter $r$: (a) scenario A (b) scenario B.}
 \label{fig:IS1}
\end{figure*}

In the second numerical example, we compare IS with RR and two variations of another benchmark algorithm called SPMS (System PAoI Minimizing Scheduler). The first variation SPMS-1 is a P-GAW system for which the transmission probabilities are selected so as to minimize the system PAoI \cite{AKAR2023109668}.
The second variation SPMS-2 is proposed in \cite{akar_etal_infocom24} where a C-GAW system (as opposed to P-GAW) is used where the transmission frequencies of a user in the C-GAW pattern are approximations to the transmission probability $p_n$ found for SPMS-1 and a packet spreading algorithm (Algorithm~2 of \cite{akar_etal_infocom24}) is used to spread out as much as possible the multiple transmissions from a single source in the transmission pattern to reduce the system AoI. In this example, we vary the algorithm parameter $\varepsilon$ in Algorithm~1 of \cite{akar_etal_infocom24} from $0$ to $1$ in steps of $0.1$ to search over multiple pattern sizes to find the best pattern that gives the lowest weighted AoI among the ones that (almost) minimize the system PAoI. We use the same heterogeneous network model as in the previous example and fix $r=100$, $c=1$, but a more heterogeneous MCS allocation is used, i.e., $s_n \in \{ 1,4,16,64,256 \}$. The percentage gain in system AoI (resp. in system PAoI) with respect to the RR policy is depicted in Fig.~\ref{fig:IS2}(a) (resp. Fig.~\ref{fig:IS2}(b)) as a function of the number of sources $N$ for the three studied algorithms IS, SPMS-1, and SPMS-2. Although SPMS-1 presents the best performance in system PAoI, its system AoI performance may even be below that of the RR policy due to its probabilistic nature. However, the SPMS-2 policy which is derived from SPMS-1 is a cyclic policy and its system AoI performance is remarkable compared to RR and system PAoI performance is slightly below that of SPMS-1. The IS algorithm presented the best system AoI performance among all the three studied policies, whereas it is outperformed by SPMS-1 and SPMS-2, in terms of the system PAoI, which was not taken into consideration at all in the design of IS. Moreover, IS performance gain increased with the number of sources for system AoI. However, the high computational complexity of IS may hinder its use for large systems, e.g., hundreds of sources. To quantify this, we obtained the computation time of IS for this example while using MATLAB R2018 (using the {\tt timeit} Matlab function) on a 2.5 GHz Windows 10 Pro laptop with 8 GB of memory, which is reported in Table~\ref{tab:computation_time}.

\begin{table}[thb] 
 \caption{Computation time for IS with respect to $N$}
    \centering
    \begin{tabular}{|c|r|}
    \hline
        $N$ & Computation Time (s)\\ \hline \hline
        10 & 0.9632   \\ \hline
        20  &  16.5596\\ \hline
        30  & 96.1454\\ \hline
        40  & 323.8208\\ \hline
        50 & 863.0550\\ \hline 
    \end{tabular}   
    \label{tab:computation_time}
\end{table}


\begin{figure*}[t]
  \centering
  \includegraphics[width=0.8\textwidth]{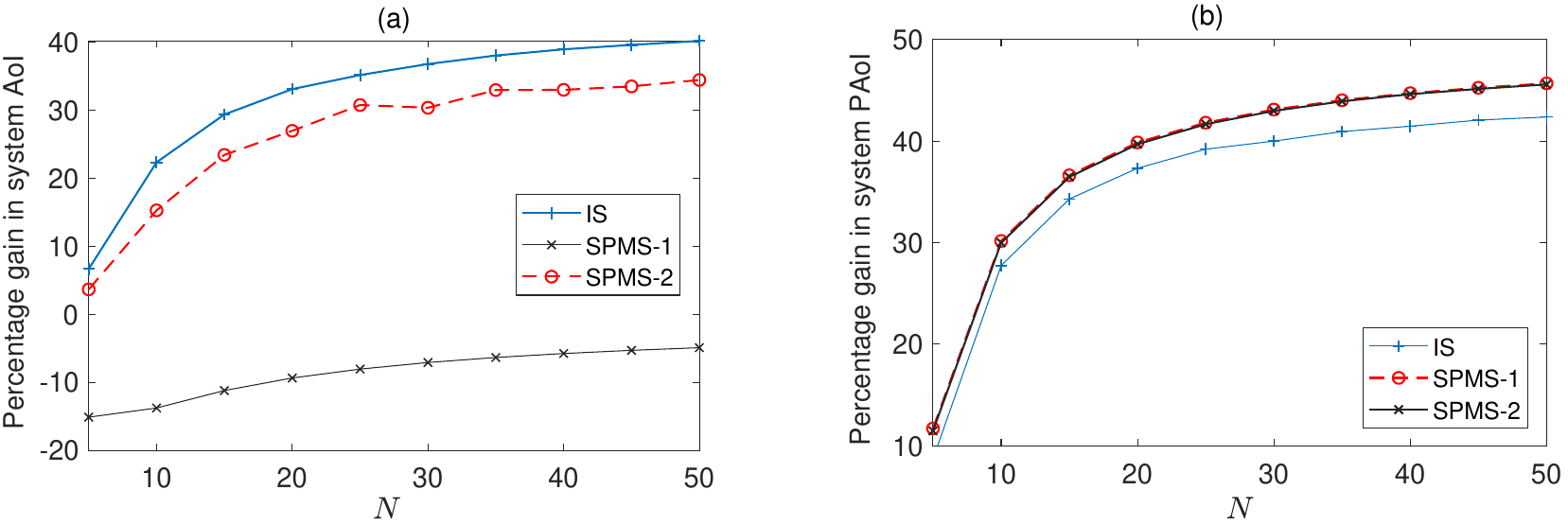}
  \caption{Percentage gain in (a) system AoI (b) system PAoI, achieved by the three studied algorithms, with respect to the RR policy as a function of the number of sources $N$.}
 \label{fig:IS2}
\end{figure*}

\subsection{Comparison of IS with Age-Aware Scheduling}
In this subsection, we compare IS scheduling with an age-aware policy in two scenarios. Specifically, we focus on the Whittle-index policy which is proposed in \cite{kadota_tn18} for a time-slotted model with heterogeneous packet errors. In scenario A, we let $s_n=1, c_n = 0, w_n = n / \sum_{n'=1}^{N}n', \ 1 \leq n \leq N$, and assume error-free transmissions, which turns out to be a subcase of our proposed setting. Subsequently, we sweep the number of sources $N$ from 2 to 50 and plot the system AoI in Fig.~\ref{fig:Example_AgeAware}(a) where IS-$y$ denotes the IS algorithm using the algorithm parameter $y$. We observe that the Whittle-index policy slightly outperforms the IS policy whose performance improves with the choice of $y=2$ than $y=1$. Larger values of $y$ did not show substantial improvements and they are not reported. Moreover, as the number of sources increases, the performance gap between the IS-$2$ and Whittle-index policies slightly grows. However, this performance improvement with the Whittle-index policy is achieved at the expense of $O(N)$ runtime complexity  as opposed to $O(1)$ for IS.
In scenario B, we focus on a scenario with transmission errors where we let $\sigma_n$ denote the transmission success probability for source-$n$. To cope with packet errors in our setting, we assume the use of persistent ARQ re-transmissions which give rise to geometrically distributed service times for source-$n$ with parameter $\sigma_n$, i.e., $s_n = 1/\sigma_n, c_n = 1-\sigma_n$. Note that, there is no ARQ mechanism in the setting of \cite{kadota_tn18} where a fresh packet is generated (a new sample is taken) for every transmission irrespective of packet errors. We let $w_n \propto n$ as in the previous example and also assume the linearly spaced error model where $\sigma_n=\sigma_{n-1}+\delta, \ 2 \leq n \leq N, \  \delta>0$ with the ratio  $\frac{\sigma_N}{\sigma_1}$ fixed to 1.25 with $\sigma_N = 1$. Subsequently, we plot the system AoI as a function of the number of sources in Fig.~\ref{fig:Example_AgeAware}(b) where we observe that the Whittle-index policy slightly outperforms the IS policy similar to the previous example with outperformance increasing with the number of sources. However, an apparent drawback of the Whittle-index policy in this scenario is the generation of a new sample for each transmission, which can be costly for energy-constrained systems where there is an energy cost incurred for every sampling operation. To elaborate, we plot the average sampling rate (average number of generated samples per slot) with respect to the number of sources in Fig.~\ref{fig:Example_AgeAware_Sampling} where the average sampling rate appears to be around 9$\%$ lower for the  IS policy compared to the Whittle-index policy in this example.

\begin{figure*}[t]
	\centering
	\includegraphics[width=0.75\textwidth]{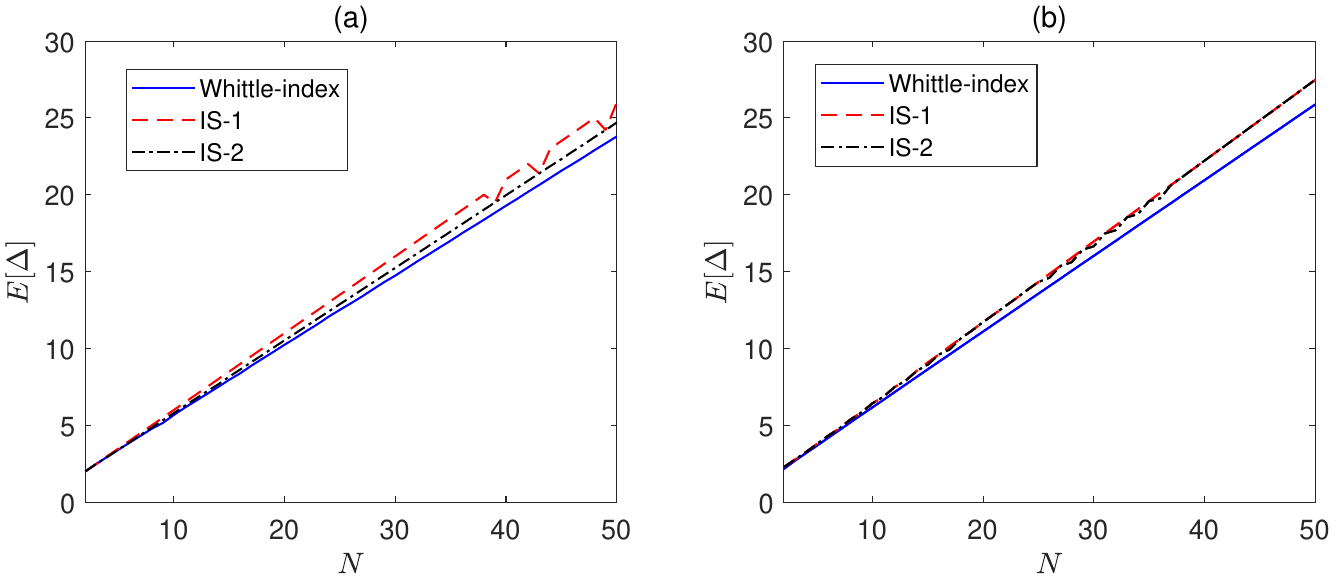}
	\caption{System AoI, obtained with the IS-$1$, IS-$2$ and Whittle-index \cite{kadota_tn18} policies, as a function of the number of sources $N$ for (a) scenario A (b) scenario B.}
	\label{fig:Example_AgeAware}
\end{figure*}

\begin{figure}[t]
	\centering
	\includegraphics[width=0.95\columnwidth]{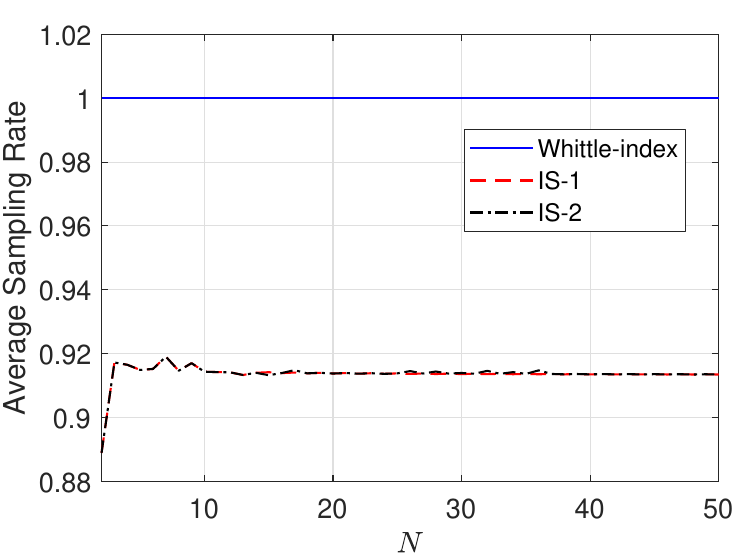}
	\caption{Average sampling rate, obtained with the IS-$1$, IS-$2$ and Whittle-index \cite{kadota_tn18} policies, as a function of the number of sources $N$ for scenario B.}
	\label{fig:Example_AgeAware_Sampling}
\end{figure}


\section{Conclusions} \label{sec:conclusions}
In this paper, we study the benefits of cyclic scheduling (in terms of system AoI) with $O(1)$ runtime complexity for multi-source generate-at-will status update systems with heterogeneous user weights and service times, without packet errors. First, we present a numerically efficient analytical method for finding the mean of the AoI processes of the individual sources. This method is then used to derive the optimum cyclic scheduler for the special case of two sources. We conclude from several two-source examples that it is advantageous to employ a cyclic scheduler as opposed to a probabilistic or round robin scheduling, for system AoI minimization. 
For the case of status update systems involving  general number of information sources, a search-based heuristic scheduler, called IS (Insertion Search), is proposed whose effectiveness in system AoI reduction compared to other existing age-agnostic schemes is demonstrated through numerical examples. 
We also compared IS against the Whittle-index policy developed for slotted systems with heterogeneous error probabilities. We have shown that the performance of the IS policy is quite close to that of the Whittle-index policy especially for small number of sources whereas IS runtime complexity and sampling rate figures are lower.
What remains to be done is optimum scheduling for random arrivals for general number of sources and also development of new cyclic scheduling algorithms that can scale to thousands of information sources for massive connectivity scenarios. Moreover, taking non-zero packet error probabilities into consideration, is another interesting research direction. AoI scheduling with non-zero polling times, the case of remote monitors and the scheduler not being co-located, duty cycle constraints, and the employment of multiple servers as opposed to the single-server model of this paper, are also suggested as future work.


\end{document}